\newcommand{\One}{\mathchoice{\rm 1\mskip-4.2mu l}{\rm 1\mskip-4.2mu l}{\rm 1\mskip-4.6mu l}{\rm 1\mskip-5.2mu l}}
\newcommand{\ignore}[1]{}
\newcommand{\nolabel}[1]{}
\newcommand{\KLI}{I}
\renewcommand{\Pr}{{\sf P}}           % Probability P.
\newcommand{\Pro}{{\sf P}}            % Probability P.
\DeclareMathOperator{\EV}{{\sf E}}    % Expected value.
\DeclareMathOperator{\Exp}{{\sf E}}   % Expected value.
\DeclareMathOperator{\SE}{{\sf SE}}   % Standard error.
\newcommand{\Naturals}{\mathbb{N}}
\newcommand{\Reals}{\mathbb{R}}
\newcommand{\sset}[1]{\{#1\}}                           % Small set: {...}.
\newcommand{\cset}[3][\middle |]{\left\{ {#2} \, #1 \, {#3} \right\}} % Set with condition: {... | ...}.
\newcommand{\stopset}[2]{\inf \left\{ {#1} \, : \, {#2} \right\}} % Stopping time set: inf{... : ...}.
\newcommand{\calo}{{\mathcal{O}}}
\newcommand{\cFt}{{\mathcal{F}}_{t}}
\newcommand{\bR}{\mathbb{R}}
\newcommand{\bN}{\mathbb{N}}
\newcommand{\cA}{\mathcal{A}}
\newcommand{\cB}{\mathcal{B}}
\newcommand{\cC}{\mathcal{C}}
\newcommand{\cN}{\mathcal{N}}
\newcommand{\cJ}{\mathcal{J}}
\newcommand{\cR}{\mathcal{R}}
\newcommand{\cP}{\mathcal{P}}
\newcommand{\cF}{\mathcal{F}}
\newcommand{\ccg}{\cC_{\gamma}}
\newtheorem{theorem}{Theorem}
\newtheorem{definition}{Definition}
\newtheorem{proposition}{Proposition}
\newtheorem{lemma}{Lemma}
\DeclareMathOperator*{\esssup}{ess\,sup}
\newcommand{\warning}[1]{}
\begin{document}

\title{ Second-Order Asymptotic Optimality  in  \\
Multisensor  Sequential Change Detection}

\author{%
Georgios~Fellouris%
\thanks{G.\ Fellouris is with the  Department of Statistics, University of Illinois, Urbana-Champaign, IL, USA, e-mail: fellouri@illinois.edu.}%
~~and~Grigory~Sokolov%
\thanks{G.\ Sokolov is with the Department of Mathematical Sciences, Binghamton University, Binghamton, NY, USA, e-mail: gsokolov@binghamton.edu.}%
}

\date{}
\maketitle

\begin{abstract} % Should be at most 250 words.
A  generalized multisensor sequential change detection problem is considered, in which  a  number of (possibly correlated) sensors monitor an environment
in real time,  the joint distribution of their observations is determined by a global  parameter vector, and  at some unknown time there is a change in an unknown subset of components of this  parameter vector. In this setup, we consider the problem of detecting the time of the change as soon as possible, while controlling the rate of false alarms. We establish  the second-order asymptotic optimality (with respect to   Lorden's criterion)  of various generalizations of the CUSUM rule;  that is,  we show that  their additional expected worst-case detection delay  (relative to the one that could be achieved if the affected subset was known) remains bounded as the rate of false alarm goes to 0,  for any possible subset of affected components.  This general framework incorporates the traditional multisensor setup in which only an unknown subset of sensors is affected by the change. The latter problem has a special structure which we exploit in order to obtain feasible representations of the proposed  schemes.  We  present the results of a simulation study where we  compare the proposed schemes with scalable detection rules that are only first-order asymptotically optimal.
Finally, in the special case that the change affects exactly one sensor,  we consider the scheme that runs in parallel the local CUSUM rules and  study the problem of specifying the local thresholds.
\end{abstract}

\section{Introduction}

Suppose that a decision maker sequentially  collects    data from multiple streams, coming for example from  sensors monitoring an environment.  At some unknown point in time there is an abrupt change in the system that is perceived by either all or only a  subset  of the deployed sensors. The problem then  is to combine the  sequentially acquired observations from all streams in order to detect the change as quickly as possible, while controlling the rate of false alarms below an acceptable level.   %For example, we may think of an environment that is monitored in real time and in which the  change is caused by an external intruder.
In what follows, we will refer to the various streams as ``sensors'', although this general framework can be applied to various setups beyond environmental monitoring, such as intruder detection in computer security  \cite{blazek,blazek2,levy}, epidemic detection in bio-surveillance \cite{bock,burkom}.

If  the regime before and after the change is completely specified,
 one may  apply a classical sequential  change detection  algorithm, such as the Cumulative Sum (CUSUM) rule, which was proposed by Page \cite{page}  and was shown by  Moustakides \cite{moust1} to be  the best possible rule in  Lorden's \cite{lorden1} minimax setup  in the case of iid observations before and after the change. However,  in many applications that motivate this multisensor sequential change detection problem it is reasonable to assume that there is only partial information regarding the post-change regime. A typical assumption in the literature is that the change affects the distribution of only a subset of sensors whose identity is unknown.  In this context, one needs to design detection rules that account for this uncertainty,  and also to quantify how much is lost relative to the ideal case that the actual affected subset is known.

The simplest  multisensor  detection rule is the \textit{multichart} CUSUM, according to which each sensor runs locally the  CUSUM algorithm  and the fusion center stops  the first time an alarm is raised by  any sensor.  This is a  \textit{one-shot} scheme suitable for decentralized implementation, as each possibly remote sensor needs to communicate with the fusion center at most once, an important  advantage in sensor networks that are characterized by limited bandwidth and energy. Tartakovksy et al.\ \cite{blazek, blazek2}  showed that  this rule has  a strong asymptotic optimality property  when  the change affects  exactly \textit{one} sensor (see also Section 9.2 in \cite{TNB_book2014}).  However, this rule  is very inefficient when a large, or even a moderate, number of sensors are actually affected. This calls for detection rules that are robust, in the sense that they should perform well for a large class of scenarios regarding the subset of affected sensors.

In order to address this problem,  Mei  \cite{mei_bio}  proposed   raising an alarm when the  sum of the local CUSUM statistics exceeds a threshold, a rule to which we   will refer  as  SUM-CUSUM. Under the assumptions of  iid observations before/after the change, completely specified but arbitrary  pre/post-change distributions  (up to  an integrability condition) and independent sensors it was shown that  SUM-CUSUM is   asymptotically optimal for every possible  subset of affected sensors.  Moreover, based on simulation experiments  it was argued that SUM-CUSUM  performs better than the multichart CUSUM unless a very small number of sensors is actually affected. This scheme was further refined by Mei in \cite{mei_sympo}, where it was suggested to use the sum of only the top $L$ local CUSUM statistics
when  at most $L$ sensors can be affected. Liu et al.\ \cite{mei_techno} and Banerjee and Veeravalli \cite{tap} considered further modifications that  allow for only a subset of sensors to be employed at any given time.

A  different, CUSUM-based, multisensor detection scheme  was proposed by
Xie and Siegmund \cite{xie}. Specifically, the proposed rule in this work was a modification of the CUSUM rule that  corresponds to the case that  the change occurs with probability $\pi$ in each sensor, where $\pi$ is the actual proportion of affected sensors.  Asymptotic approximations  for the operating characteristics
of  this  detection rule  were obtained in the case of independent sensors, each of which observes  Gaussian iid observations whose mean may change from $0$ to some unknown value. Moreover, it was argued based on simulation experiments that this detection rule is more efficient than SUM-CUSUM and this phenomenon was explained on the basis that SUM-CUSUM does not incorporate  the fact that all sensors are assumed to have the same change-point.

The main motivation for our  work was to propose and study  multisensor detection rules that enjoy a stronger asymptotic optimality property than SUM-CUSUM and are more efficient in practice. However,  we consider a more general  setup that  incorporates applications that are not captured by the  traditional multisensor framework, such as the power outage detection problem treated by Chen et al.\ \cite{chenban}. Specifically, we assume that there is a global parameter vector that determines the joint distribution of the (possibly correlated)  data streams and that the  change modifies  a subset of the components of this parameter vector. The affected subset of components
is unknown, but it is assumed to belong to a given class of subsets, $\cP$, that reflects our prior information regarding the location of the change. For example,  in the extreme case that there is no information regarding the affected subset,  $\cP$ consists of all non-empty subsets of components.

In this generalized multisensor setup, we consider first of all the CUSUM-type rule that employs the Generalized  Likelihood Ratio (GLR) statistic with respect to the unknown subset, to which we refer as GLR-CUSUM.  Moreover, we consider two alternative,  \emph{mixture}-based CUSUM-type schemes. We show that all these  detection rules enjoy a  \textit{second-order} asymptotic optimality property, i.e., that their additional worst-case detection delay (relative to the one  that could be achieved if the affected subset was known) remains bounded as the rate of  false alarms goes to 0  for any possible subset of affected components. This should be contrasted with the usual, weaker asymptotic optimality property,    according to which the asymptotic  relative efficiency  goes to 1,  and to which we refer as \textit{first-order} asymptotic optimality.

Our general framework includes as a special case the traditional setup in which  the change affects only an unknown subset of sensors, leaving the  remaining ones completely unaffected. In this special case, assuming further independent sensors and an upper bound on the size of the affected subset, we  obtain  representations of the  GLR-CUSUM (and one of the proposed mixture-based rule)  which  are  feasible when the number of sensors is large.   Moreover, we  show that while the  detection rule that was proposed by Xie and Siegmund \cite{xie}  is also second-order asymptotically optimal for any choice of the parameter $\pi$, this parameter cannot be interpreted  in this scheme as the proportion of affected sensors. Thus, we provide a theoretical explanation for the empirically observed  superiority of this rule over  SUM-CUSUM, which is only first-order asymptotically  optimal, as well as its robustness with respect to  $\pi$.

In the  special case that it is known in advance that the change affects exactly \textit{one} sensor,   GLR-CUSUM reduces to the multichart CUSUM, according to which an alarm is raised when a  local CUSUM statistic exceeds a local threshold.  The selection of identical thresholds seems to be a default choice in the literature, although non-identical thresholds have also been considered (see, e.g.  \cite[p.467]{TNB_book2014}). We address the problem of threshold selection in this setup and we show that seemingly reasonable threshold selections may lead to non-robust behavior, as the multichart CUSUM may fail to be even   first-order asymptotically optimal. We then argue in favor of a  specific selection of thresholds that equalizes  the asymptotic \textit{relative} performance loss  under the various scenarios.

Let us discuss at this point the connection of our work with the literature of sequential change detection.  Lorden \cite{lorden1} assumed (in  a ``single-sensor'' setup)  that the  distribution of the observations belongs to an exponential family and is specified after the change up to an unknown parameter. In this context,  he established the first-order asymptotic optimality of a  CUSUM-type rule that is based on  (a truncated away from 0 version of) the  GLR statistic.   In the special case of detecting a shift in the mean of  Gaussian observations, Siegmund  and Venkatraman \cite{venka} obtained approximations for the operating characteristics of the corresponding GLR-CUSUM. Our work differs from these two papers  in that  in our setup there is  a finite number of scenarios in the post-change regime.
% which is the reason why we have bounded performance loss under every scenario.
Nevertheless, we also use the term  GLR-CUSUM to describe the CUSUM-type rule whose detection statistic  is based on  maximizing  with respect to the affected subset, which is  the unknown parameter in our setup.

We should also  mention an alternative formulation of the multisensor sequential change detection problem,  in which the change-points in the various sensors are in generally different and the  goal is to detect the first of them. This problem was considered from a Bayesian point of view by Ragavan and Veeravalli \cite{rag} and Ludkovski \cite{lud},  where a model for the change propagation was assumed. A minimax approach was considered by Hadjiliadis et al.\ \cite{oly} and Zhang et al.\ \cite{oly2}, where it was shown that the multichart CUSUM  has a strong asymptotic optimality property with respect to an extended Lorden criterion in which the worst case is taken with respect to all individual change-points and the history of observations up to the minimum of the change-points.

Another closely related problem to the one consider in this paper is the  joint change detection and isolation problem, in which one is interested in isolating the affected subset  upon detecting the change. While the GLR-CUSUM provides a natural estimator for this subset, a proper treatment of this problem  requires a different  formulation, in which the misclassification rate is  also controlled together with the  rate of false alarms. For more details on the joint sequential change detection and isolation problem we refer to Nikiforov \cite{niki1, niki2}, Oskiper and Poor \cite{oski}, Tartakovksy \cite{tarta_multi}.

The rest of the paper is organized as follows: in Section \ref{sec:general_problem} we formulate a general multisensor sequential change detection problem and  establish the second-order asymptotic optimality of various CUSUM-based detection rules.
In Section \ref{sec:special_problem} we restrict ourselves to the traditional setup of independent sensors where only a subset of them is affected by the change.
In Section \ref{sec:one_affected} we focus on the  case that the change  affects exactly one sensor and the design of the multichart CUSUM.
In Section \ref{sec:simulation} we present the findings of a simulation study that  illustrates our asymptotic results. We conclude in Section \ref{sec:conclusions}, whereas  we present the proofs of most results in \hyperref[appen]{Appendix}.
In what  follows we set $x^{+}=\max\{x,0\}$, we use $|\cdot|$ to denote set cardinality,  and we write $x \sim y$ when $x/y \rightarrow 1$, $x\simeq y$ when $x - y \rightarrow 0$.

\section{A general multisensor sequential  quickest detection problem} \label{sec:general_problem}

\subsection{Problem formulation} \nolabel{sec:formulation}

Let $K$ be the number of sensors (streams) and $X_t^k$  the observation in sensor $k$ at time $t$, where $1 \leq k \leq K$ and  $t \in \bN=\{1, 2 \ldots\}$. We assume that  the random vectors  $\{X_t \equiv (X_t^1, \ldots, X_t^{K})\}_{ t\in \bN}$  are  independent over time, but we allow observations from different sensors at the same time instant to be dependent. Moreover, we assume that the distribution of the sequence $\{X_t\}_{ t\in \bN}$  changes at some  unknown deterministic time  $\nu \in \bN \cup\{0\}$. To be more specific, let $f(\cdot |\theta)$ be a family of densities with respect to some dominating $\sigma$-finite measure, where  $\theta$ is a $N$-dimensional parameter vector and $N$ not necessarily equal to  $K$.
Then, we assume that $X_t \sim f(\cdot |\theta_t)$, where
$\theta_t:=(\theta_t^1, \ldots, \theta_t^{N})$ is given by
\begin{align} \label{eq:setup:param_change}
    \begin{split}
        \theta_t^k &= 0 \quad  \forall \;  1 \leq k \leq N,  \quad t \le \nu,  \\
        \theta_t^k &=
        \begin{cases}
            1 , & \quad k \in \cA, \\
            0 , &  \quad k \notin \cA,
        \end{cases}
        \qquad t > \nu ,
    \end{split}
\end{align}
and $\cA \subset \{1, \ldots, N\}$ is a subset of components of $\theta_t$. Note that the change may affect the observations in all sensors and that while the  pre-change distribution is completely specified,  the post-change  distribution is specified up to subset $\cA$ which is typically  unknown.
However,  we   will assume that the true affected subset  belongs to a given class $\cP$ of subsets of $\{1, \ldots, N\}$.  For example, if it is known that the change will affect  exactly (resp.\ at most) $L$ components of $\theta$, then   $\cP=\cP_{L}$ (resp.\ $\cP = \overline{\cP}_{L}$), where
\begin{align} \label{eq:class_arl}
\begin{split}
    \cP_{L} &:=\{\cA : |\cA|= L\}, \\
    \overline{\cP}_{L} &:= \{ \cA : 1 \leq |\cA| \leq L \},
\end{split}
\end{align}
$|\cA|$ being the cardinality of subset $\cA$ and $1 \leq L \leq N$. Note  that   $\overline{\cP}_{N}$ corresponds to the case of complete ignorance regarding the affected subset. Moreover,  any class $\cP$ is a subset of a class of the form $\overline{\cP}_{L}$ for some $1 \leq L \leq N$.  However, in what follows we do not make any assumption about $\cP$ unless this is explicitly stated.

Let   $\Pro _{\infty}$ and $\Exp_\infty$ the  probability measure and the expectation  when the change never occurs ($\nu =\infty$) and  $\Pro^{\cA}_{\nu}$ and $\Exp^{\cA}_{\nu}$ the
probability measure and the expectation when the change occurs at time $\nu$ only in subset $\cA$.  A sequential change detection rule is an $\{\cFt\}$-stopping time, $T$, at which the fusion center raises an alarm declaring that the change has occurred. Here,   $\cFt$ the $\sigma$-algebra generated by all observations up to time $t$, i.e., $\cFt := \sigma (X_s; \; 1 \leq s \leq t)$.
Following Lorden's approach \cite{lorden1} we measure the performance of a  detection rule $T$ \textit{when the change occurs in subset $\cA$}  with the  worst (with respect to the change point) conditional expected delay given the worst possible history of observations up to the change point:
\[
    \cJ_{\cA} [T] := \sup _{\nu \ge 0} \esssup \Exp^{\cA}_\nu [ (T - \nu)^{+}  \, | \, \mathcal{F} _\nu ].
\]
Moreover, we control the rate of false alarms below an acceptable, user-specified level $\gamma \geq 1$, thus,  we restrict ourselves to sequential detection rules in the class $\ccg=\{T: \, \EV_{\infty}[T] \geq \gamma\}$.
We will be interested in sequential detection rules that attain the optimal performance
%\begin{align} \label{eq:problem}
$ \inf_{T \in \ccg} \cJ_{\cA} [T]$
%\end{align}
for every $\cA \in \cP$ asymptotically as $\gamma \rightarrow \infty$.

\begin{definition} \nolabel{def:ao:first-order}
    Suppose that the detection rule  $T^{*}$ can be designed  so that
    $T^{*} \in \ccg$ for any given $\gamma \geq 1$.   We will say that  $T^{*}$ is  asymptotically optimal of
  \begin{enumerate}
  \item[(i)]  first-order,  if  for every  $\cA \in \cP$ we have as $\gamma \rightarrow \infty$
    \begin{align*} \nolabel{eq:ao:first-order}
        \cJ_{\cA} [T^{*}] \sim         \inf_{T \in \ccg} \cJ_{\cA} [T],
    \end{align*}
    \item[(ii)] second-order,  if  for every  $\cA \in \cP$ we have
    \begin{align*} \nolabel{eq:ao:second-order}
        \cJ_{\cA} [T^{*}] -  \inf_{T \in \ccg} \cJ_{\cA} [T] = \calo (1),
    \end{align*}
    where $\calo(1)$ is a bounded term as $\gamma \rightarrow \infty$.
    \end{enumerate}
\end{definition}

\noindent \textit{Remark:} If  $T^{*}$ is  first-order  asymptotically optimal, then it is possible that for some $\cA \in \cP$ we have
$
\cJ_{\cA} [T^{*}] -  \inf_{T \in \ccg} \cJ_{\cA} [T] \rightarrow \infty,
$
since both terms go to infinity as $\gamma \rightarrow \infty$. In such a case,
  the \textit{additional} (worst-case) expected number of observations that $T^{*}$  requires (relative to the corresponding optimal detection delay that would be achievable if the actual affected subset was known)  is unbounded  as the rate of false alarms goes to $0$.  This motivates the stronger notion of second-order asymptotic optimality.\\

It is the main goal of this work to propose detection rules that are second-order asymptotically optimal. In order to do so, we need to characterize the optimal performance that is attained by the oracle detection rule.

\subsection{The oracle rule}
For any given subset $\cA$, the solution to the constrained optimization problem
$
 \inf_{T \in \ccg} \cJ_{\cA} [T]
$
is well known. In order to describe it, let us set
\begin{align} \label{eq:likelihood_ratio}
    \Lambda_t^{\cA} &:=  \frac{d\Pro_0^{\cA}}{d\Pro_\infty} (\cFt)
    \quad \text{and}  \quad
        Z_t^{\cA} :=  \log  \Lambda_t^{\cA}
\end{align}
and note that  due to the assumption of  independence over time  we have
\begin{align*}
    \Lambda_t^{\cA} &= \Lambda_{t-1}^{\cA} \cdot  \exp\{\ell_t^\cA\}  , \quad \Lambda_0^{\cA} := 1, \\
    Z_t^{\cA} &= Z_{t-1}^{\cA}+  \ell_t^\cA  , \quad Z_0^{\cA} := 0,
\end{align*}
where   $\ell_t^\cA$ is the log-likelihood ratio
\begin{align} \label{eq:log-likelihood_ratio}
    \ell_t^\cA  &:= \log \frac{d\Pro_0^{\cA}}{d\Pro_\infty} (X_t) .
\end{align}
The CUSUM rule  for detecting a change in subset $\cA$ is
\begin{align} \label{eq:cusum}
    S^{\cA}_b & = \inf \left\{  t \in \bN : \tilde{Y}^{\cA}_{t} \geq b \right\},
\end{align}
where $b \in \bR$ is a deterministic threshold and  $\tilde{Y}^{\cA}$ is the so-called CUSUM statistic
\begin{align} \label{eq:cusum:stat_neg}
    \tilde{Y}^{\cA}_{t} := Z_t^{\cA} - \min_{0 \leq s < t} Z_{s}^{\cA}, \quad t \in \Naturals ,
\end{align}
which may be equivalently defined through the following recursion
\begin{align} \label{recursion2}
   \tilde{Y}^{\cA}_{t} = \max \{ \tilde{Y}_{t-1}^{\cA}  \, , \, 0 \} +\ell_t^\cA, \quad   \tilde{Y}_{0}^{\cA}:=0.
\end{align}
Moustakides  \cite{moust1}  showed that, for any given $\gamma\geq 1$,  $S_b^{\cA}$ optimizes Lorden's criterion, $\cJ_{\cA}$, in the class $\ccg$, when threshold $b$ is chosen so that $\Exp_{\infty}[S^{\cA}_{b}] = \gamma$. Earlier,    Lorden \cite{lorden1} had
established  the \textit{first-order} asymptotic optimality of the CUSUM rule
 by showing that
\begin{align} \label{eq:kl_distance}
    \inf_{T \in \ccg} \cJ_{\cA} [T]  & \sim  \frac{\log \gamma} { I_{\cA}}, \quad  I_{\cA}:= \Exp_{0}^{\cA} [ \ell_1^\cA],
\end{align}
under the assumption that $\ell_1^{\cA}$ is $\Pro_0^{\cA}, \Pro_\infty$- integrable. However, the exact optimality of the CUSUM rule can be used to obtain a  second-order characterization of the optimal performance. Indeed,   if additionally we have $ \Exp_{0}^{\cA} [ (\ell_1^\cA)^{2}]<\infty$, it is well known  (see, e.g., \cite{TNB_book2014}, \cite{wood}) that
\begin{align*}
\Exp_0^{\cA}[S_b^{\cA}]= \frac{b}{I_{\cA}} + \Theta(1),
\end{align*}
where $\Theta(1)$ is a term that  is bounded from above and below as $b \rightarrow \infty$. Moreover,  from \cite{khan} it follows that
$ \Exp_\infty[S_b^{\cA}] \sim e^b / v_{\cA}$ as $b \rightarrow \infty$, where $v_\cA$ is a model-dependent, renewal-theoretic constant. Therefore, we conclude that
$$
 \inf_{T \in \ccg} \cJ_{\cA} [T]  =  \frac{\log \gamma} { I_{\cA}} + \Theta(1).
$$
We summarize these characterizations of first and second order asymptotic optimality in the following Lemma.
%. Here, $o(1)$ is a vanishing term and  $\calo(1)$ a bounded from above term             as $\gamma \rightarrow \infty$.

\begin{lemma}
     \begin{enumerate}
        \item[(i)]  Suppose that  $\ell_1^{\cA}$ is  integrable with respect to $\Pro_\infty$ and $\Pro_0^\cA$ for every $\cA \in \cP$ and  that,  for any given $\gamma \geq 1$, a detection rule $T^{*}$ can be designed  so that
    $T^{*} \in \ccg$.   If
            \begin{align*}
                \cJ_{\cA} [T^{*}]  \leq   \frac{\log \gamma} {I_{\cA}} (1+o(1)), \quad \; \forall \;  \cA \in \cP,
            \end{align*}
            then $T^{*}$ is  first-order asymptotically optimal  with respect to class $\cP$.
        \item[(ii)] Suppose  additionally that $\ell_1^{\cA}$ is  square-integrable with respect to $\Pro_0^\cA$ for every $\cA \in \cP$. If
            \begin{align*}
                \cJ_{\cA} [T^{*}]  \leq   \frac{\log \gamma} {I_{\cA}}  + \calo (1),
                \quad \; \forall \;  \cA \in \cP,
           \end{align*}
                 then $T^{*}$ is  second-order asymptotically optimal  with respect to class $\cP$.
    \end{enumerate}
\end{lemma}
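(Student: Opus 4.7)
The lemma is a direct bookkeeping consequence of the two characterizations of the oracle performance $\inf_{T \in \ccg} \cJ_{\cA}[T]$ that are displayed in the discussion immediately preceding the statement, so the plan is essentially to assemble those ingredients together with the hypotheses.

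For part (i), I would start from the trivial lower bound: since the hypothesis guarantees $T^{*} \in \ccg$, one automatically has $\inf_{T \in \ccg} \cJ_{\cA}[T] \leq \cJ_{\cA}[T^{*}]$. Combined with the integrability of $\ell_1^{\cA}$, Lorden's first-order characterization recalled in \eqref{eq:kl_distance} gives $\inf_{T \in \ccg} \cJ_{\cA}[T] \sim \log\gamma/I_{\cA}$. Chaining these with the hypothesis $\cJ_{\cA}[T^{*}] \leq (\log\gamma/I_{\cA})(1+o(1))$ produces
\begin{equation*}
\inf_{T \in \ccg} \cJ_{\cA}[T] \;\leq\; \cJ_{\cA}[T^{*}] \;\leq\; \frac{\log\gamma}{I_{\cA}}\bigl(1+o(1)\bigr) \;\sim\; \inf_{T \in \ccg} \cJ_{\cA}[T],
\end{equation*}
which is precisely first-order asymptotic optimality in the sense of Definition~1(i). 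This has to be done uniformly in $\cA \in \cP$, but since the hypothesis is stated for every $\cA$, there is nothing more to check.

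For part (ii), I would appeal to the stronger expansion that the excerpt has just derived by combining Moustakides' exact optimality of the CUSUM rule $S_b^{\cA}$, the square-integrability-based identity $\Exp_0^{\cA}[S_b^{\cA}] = b/I_{\cA} + \Theta(1)$, and the renewal-theoretic tail estimate $\Exp_\infty[S_b^{\cA}] \sim e^b/v_{\cA}$ from \cite{khan}. Solving $\Exp_\infty[S_b^{\cA}] = \gamma$ for $b$ gives $b = \log\gamma + O(1)$, so that
\begin{equation*}
\inf_{T \in \ccg} \cJ_{\cA}[T] \;=\; \cJ_{\cA}[S_b^{\cA}] \;=\; \frac{\log\gamma}{I_{\cA}} + \Theta(1).
\end{equation*}
Subtracting this from the hypothesis $\cJ_{\cA}[T^{*}] \leq \log\gamma/I_{\cA} + \calo(1)$ yields
\begin{equation*}
0 \;\leq\; \cJ_{\cA}[T^{*}] - \inf_{T \in \ccg} \cJ_{\cA}[T] \;\leq\; \calo(1) - \Theta(1) \;=\; \calo(1),
\end{equation*}
the lower bound being trivial from $T^{*} \in \ccg$, and the $\Theta(1)$ being bounded above (as well as below) by construction.

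There is no real obstacle: the lemma merely repackages the upper bounds assumed for $T^{*}$ against the matching lower bounds for the oracle. The only point that deserves attention in writing the proof is to make sure that for part (ii) the $\Theta(1)$ entering the expansion of $\inf_{T \in \ccg} \cJ_{\cA}[T]$ is bounded on both sides — otherwise the subtraction $\calo(1) - \Theta(1)$ would not collapse to $\calo(1)$ — and that the passage from the threshold calibration $\Exp_\infty[S_b^{\cA}] = \gamma$ to $b = \log\gamma + O(1)$ uses only the already-cited renewal-theoretic asymptotic and does not require any extra hypothesis beyond those in the lemma.
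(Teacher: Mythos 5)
Your proposal is correct and follows exactly the route the paper intends: the lemma is presented as a summary of the derivations immediately preceding it (Lorden's first-order bound \eqref{eq:kl_distance}, Moustakides' exact optimality of the CUSUM, the expansion $\Exp_0^{\cA}[S_b^{\cA}] = b/I_{\cA} + \Theta(1)$, and the renewal-theoretic ARL asymptotic), and your argument assembles precisely these ingredients with the trivial lower bound $\inf_{T\in\ccg}\cJ_{\cA}[T] \leq \cJ_{\cA}[T^{*}]$. Your attention to the two-sided boundedness of the $\Theta(1)$ term in part (ii) is the right point to flag, and nothing further is needed.
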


\noindent \textit{Remark:} It is useful to stress that the CUSUM statistic is often defined in the literature as  % according to  Page's \cite{page} version
\begin{align} \label{eq:cusum:stat_pos}
    Y^{\cA}_{t} := Z_t^{\cA} - \min_{0 \leq s \leq t} Z_{s}^{\cA},
\end{align}
or equivalently through the following  recursion
\[
    Y^{\cA}_{t} = \max\left\{ Y_{t-1}^{\cA} + \ell_t^\cA \, , \,  0 \right\},\quad
    Y_{0}^{\cA}:=0.
\]
Thus,  $Y^{\cA}_{t}$ differs from $\tilde{Y}^{\cA}_{t}$  only when the latter is negative, and the two statistics lead to the same stopping time when the threshold $b$ is positive. However,  with the non-negative version of the CUSUM statistic it is not in general possible to satisfy the false alarm constraint for any $\gamma>1$. Since in this work we focus on asymptotic optimality results,  we will use both versions of the CUSUM statistic, depending on which one is more technically convenient each time.

\subsection{Generalized and Mixture based CUSUM rules} \nolabel{sec:glrcusum}

We now construct various multichannel detection rules that will be shown to be second-order asymptotically optimal with respect to any given class $\cP$. In order to do so,  suppose for the moment  that the affected subset is   $\cA \in \cP$. At some time $t >\nu$, the likelihood ratio of $\Pro_\nu^{\cA}$ versus $\Pro_\infty$ is given by
\begin{align*} \nolabel{eq:lr}
    \frac{\Lambda_{t}^{\cA} }{  \Lambda_{\nu}^{\cA} } =
    \frac{d\Pro_\nu^{\cA}}{d\Pro_\infty} (\cFt) =
     \prod_{s=\nu+1}^{t} \exp (\ell_s^{\cA} ) ,
\end{align*}
and the corresponding log-likelihood ratio  takes the form
\[
    Z_{t}^{\cA} - Z_{\nu}^{\cA}
    = \log \frac{\Lambda_{t}^{\cA} }{  \Lambda_{\nu}^{\cA} }
    =\sum_{s=\nu+1}^{t} \ell_s^{\cA},
\]
where $\Lambda_t^{\cA}$, $Z_t^{\cA}$ and $\ell_t^{\cA}$ are defined in \eqref{eq:likelihood_ratio} and  \eqref{eq:log-likelihood_ratio}, respectively.  Maximizing with respect to both the unknown change-point, $\nu$, and the unknown subset, $\cA$,  suggests raising an alarm when the statistic
\[
    \max_{0 \leq s < t}  \max_{\cA \in \cP} (Z_{t}^{\cA} - Z_{s}^{\cA})
\]
becomes larger than some  threshold $b$.  Interchanging the two maximizations reveals that this detection statistic can be equivalently expressed as
\begin{align*} \nolabel{eq:represent}
   \max_{\cA \in \cP}   \max_{0 \leq s < t} (Z_{t}^{\cA} -Z_{s}^{\cA})
 =  \max_{\cA \in \cP} \tilde{Y}_{t}^{\cA} ,
\end{align*}
where $\tilde{Y}^{\cA}$ is defined in \eqref{eq:cusum:stat_neg}. Therefore, this rule stops   the first time $t$  there is a subset $\cA$ whose corresponding
CUSUM statistic  $\tilde{Y}_t^{\cA}$ exceeds threshold $b$.  If  we  only consider positive thresholds,  we may equivalently consider the  detection statistic $\max_{\cA \in \cP}  Y_{t}^{\cA}$, where   $Y^{\cA}$ is defined in \eqref{eq:cusum:stat_pos} (see the remark in the end of the previous subsection).

We can  generalize this scheme by allowing the thresholds that correspond to the various CUSUM statistics to differ. Thus, if
$\{b_\cA, \cA \in \cP\}$ is a family of  positive, constant thresholds, we obtain
  the  following  detection rule, to which we refer  as GLR-CUSUM:
\begin{align} \label{eq:glr_cusum}
     \min_{\cA \in \cP} S^{\cA}_{b_\cA }
        &= \inf \left \{ t \in \bN:  \max_{\cA \in \cP}  \, (Y_{t}^{\cA} - b_\cA)  \geq 0 \right\},
\end{align}
(Recall that  $S^{\cA}_{b}$ is the CUSUM rule defined in \eqref{eq:cusum}). We will show in Proposition  \ref{prop:epic_fail} that some intuitively reasonable threshold specifications, such as selecting each threshold $b_{\cA}$ proportionally to the Kullback-Leibler divergence $I_\cA$, may fail to guarantee even the first-order asymptotic optimality of  GLR-CUSUM.   For this reason, in what follows we restrict our attention to thresholds of the form
\begin{align*} \nolabel{select}
    b_\cA=b - \log p_{\cA} ,
\end{align*}
where  $b$ is a positive threshold that is determined by the false alarm constraint and  depends on $\gamma$, whereas  $\{p_\cA, \cA \in \cP\}$ are constants (\textit{weights}) that do \textit{not} depend on $\gamma$ and satisfy
\[
    p_{\cA} >0 \quad \! \forall \; \cA \in \cP \quad \text{and} \quad \sum_{\cA \in \cP} p_{\cA}=1.
\]
With this threshold specification the  GLR-CUSUM in \eqref{eq:glr_cusum}  takes the form
\begin{align} \label{eq:glr_cusum:cusum}
        S_b &:= \inf \left \{ t \in \bN:  \max_{\cA \in \cP} \left(Y_{t}^{\cA}  +\log p_\cA \right) \geq b \right\}
\end{align}
and may  be equivalently expressed as follows
\begin{align}
    S_b &= \inf \left \{  t \in \bN:  \max_{0 \leq s \leq t}  \max_{\cA \in \cP} \left(Z_{t}^{\cA} - Z_{s}^{\cA} +\log p_\cA \right) \geq b \right\} \nonumber \\
        &= \inf \left \{ t \in \bN:   \max_{0 \leq s \leq t}   \max_{\cA \in \cP} p_{\cA} \exp (Z_{t}^{\cA} - Z_{s}^{\cA})  \geq  e^{b} \right\}.    \label{eq:glr_cusum:llr}
\end{align}
When the $p_{\cA}$'s are identical,  we will refer to $S_b$ as the \emph{unweighted} GLR-CUSUM. However, the introduction of weights allows us to present the GLR-CUSUM together with competitive \emph{mixture-based} CUSUM tests. Specifically,  if we do not maximize but average  with respect to the affected subset in \eqref{eq:glr_cusum:llr}, we obtain  the following stopping time:
\begin{align} \label{eq:mixture_cusum:llr}
    \bar{S}_b  &:= \inf \left \{  t \in \bN:   \max_{0 \leq s \leq t}   \sum_{\cA \in \cP} p_{\cA} \exp (Z_{t}^{\cA} - Z_{s}^{\cA})  \geq  e^{b} \right\} .
\end{align}
 If we further  interchange maximization and summation in the detection statistic in  \eqref{eq:mixture_cusum:llr}, we obtain  the alternative detection statistic
$
    \sum_{\cA \in \cP} p_{\cA}  \exp( Y_t^{\cA} ).
$
For technical convenience, we replace  $Y^{\cA}$  with $\tilde{Y}^{\cA}$ and set
\begin{align} \label{eq:mixture_cusum:cusum}
    \tilde{S}_b := \inf \left \{  t \in \bN:    \sum_{\cA \in \cP} p_{\cA} \,
        \exp( \tilde{Y}_t^{\cA} )   \geq  e^{b} \right\} .
\end{align}

To sum up, we have introduced three distinct multichannel detection rules: the  GLR-CUSUM, $S_b$, which can be equivalently defined by either \eqref{eq:glr_cusum:cusum} or  \eqref{eq:glr_cusum:llr},  and two \textit{mixture-based} CUSUM rules, $\bar{S}_b$ and $\tilde{S}_b$, defined in \eqref{eq:mixture_cusum:llr} and \eqref{eq:mixture_cusum:cusum} respectively.  In the following theorem,  whose proof can be found in the \hyperref[appen]{Appendix}, we establish their second-order asymptotic optimality property with respect to an arbitrary class $\cP$.

\begin{theorem} \label{thm:second-order:glr_mix_cusum}
 For an arbitrary class $\cP$,
    \begin{enumerate}
        \item[(i)] if   $b =\log \gamma$,  then    $S_b, \bar{S}_b, \tilde{S}_b \in  \ccg$;
        \item[(ii)] if $\ell_1^{\cA}$ is  square-integrable with respect to $\Pro_0^\cA$ for every $\cA \in \cP$,  then $S, \bar{S}, \tilde{S}$ are second-order asymptotically optimal with respect to $\cP$.
    \end{enumerate}
\end{theorem}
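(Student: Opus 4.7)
The preceding Lemma reduces second-order optimality of each rule to two ingredients: (a) membership $T \in \ccg$ when $b = \log \gamma$, and (b) the upper bound $\cJ_\cA[T] \le \log\gamma / I_\cA + \calo(1)$ for every $\cA \in \cP$. Ingredient (b) is common to all three rules and follows from a single clean domination argument; ingredient (a) is the main technical hurdle.

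\textbf{Proof of Part (ii) via a single-subset dominating CUSUM.} Fix $\cA \in \cP$. In each detection statistic, I would retain only the $\cA$-term: for $S_b$, use $\max_{\cA' \in \cP}(Y_t^{\cA'} + \log p_{\cA'}) \geq Y_t^{\cA} + \log p_{\cA}$; for $\bar S_b$ and $\tilde S_b$, discard all but the $\cA$-summand in the weighted sum of exponentials. In each case this yields the pointwise domination
\begin{align*}
  T \le S^{\cA}_{b - \log p_{\cA}}, \qquad T \in \{S_b, \bar S_b, \tilde S_b\},
\end{align*}
where for $\tilde S_b$ I rely on the remark in the excerpt that $\tilde Y^\cA$ and $Y^\cA$ yield the same stopping time once the threshold is positive, which holds for all sufficiently large $\gamma$. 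Since $\cJ_\cA[\cdot]$ is monotone in the stopping time, and since Moustakides' exact minimax property gives $\cJ_\cA[S^\cA_c] = \Exp_0^\cA[S^\cA_c]$, the second-order CUSUM expansion $\Exp_0^\cA[S^\cA_c] = c/I_\cA + \Theta(1)$---valid under the square-integrability hypothesis and already recorded in the excerpt---evaluated at $c = b - \log p_{\cA} = \log\gamma - \log p_{\cA}$ delivers $\cJ_\cA[T] \le \log\gamma / I_\cA + \calo(1)$, which is exactly condition (ii) of the preceding Lemma.

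\textbf{Part (i) and the main obstacle.} The plan for the false-alarm constraint is to exploit the $\Pro_\infty$-martingale property of the likelihood ratios $\Lambda_t^\cA$ and, crucially, of the weighted mixture $\sum_{\cA \in \cP} p_\cA \Lambda_t^\cA$, which is a positive $\Pro_\infty$-martingale starting from $1$ because the weights sum to one. For $\bar S_b$ and $\tilde S_b$ the detection statistic is (a functional of a running maximum of) this mixture martingale, and the classical regenerative/SPRT decomposition of CUSUM applied to it yields $\Exp_\infty[T] \ge e^b = \gamma$. For the GLR-CUSUM $S_b = \min_{\cA} S^{\cA}_{b - \log p_{\cA}}$, applying optional stopping on each $\Lambda^\cA$ separately gives a per-subset, per-excursion false-alarm probability bounded by $p_\cA e^{-(b - \log p_\cA)} = e^{-b}$, and the identity $\sum_{\cA} p_\cA = 1$ then converts a union bound over subsets back into a clean $e^{-b}$ per excursion of the combined rule. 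The main difficulty I anticipate is pinning down the constant: a bare Doob-type maximal inequality loses constants, so the regeneration argument is indispensable, and for $S_b$ there is an additional coupling because the maximization over $\cA$ links excursions across subsets, which is the step I would execute most carefully.
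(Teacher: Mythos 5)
Your part (ii) is correct and is essentially the paper's argument: the pointwise domination $T\le S^{\cA}_{b-\log p_{\cA}}$ for each $T\in\{S_b,\bar S_b,\tilde S_b\}$, combined with Moustakides' identity $\cJ_{\cA}[S^{\cA}_c]=\Exp_0^{\cA}[S^{\cA}_c]$ and the second-order expansion $\Exp_0^{\cA}[S^{\cA}_c]=c/I_{\cA}+\Theta(1)$, gives $\cJ_{\cA}[T]\le b/I_{\cA}+\calo(1)$ since $\log p_{\cA}$ does not depend on $\gamma$. (The paper only needs this for $S_b$, because $\bar S_b\le S_b$ and $\tilde S_b\le S_b$ imply $\cJ_{\cA}[\bar S_b],\cJ_{\cA}[\tilde S_b]\le\cJ_{\cA}[S_b]$; your doing it for all three separately is harmless.)

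Part (i) has a genuine gap at $\tilde S_b$. Your claim that its detection statistic is ``a functional of a running maximum of the mixture martingale'' is false: the statistic of $\tilde S_b$ is $\sum_{\cA}p_{\cA}\exp(\tilde Y_t^{\cA})=\sum_{\cA}p_{\cA}\max_{0\le s<t}\exp(Z_t^{\cA}-Z_s^{\cA})$, i.e.\ the \emph{mixture of the running maxima}, which dominates the running maximum of the mixture. Hence $\tilde S_b\le\bar S_b$, the individual CUSUM statistics $\tilde Y^{\cA}$ regenerate at different epochs for different $\cA$, and the SPRT/regenerative decomposition that works for $\bar S_b$ (Lorden's Theorem 2 plus Wald's likelihood-ratio identity applied to $\bar\Lambda_t=\sum_{\cA}p_{\cA}\Lambda_t^{\cA}$ --- this part of your plan matches the paper) does not transfer; if anything the inequality $\tilde S_b\le\bar S_b$ points the wrong way for the ARL. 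The missing idea is the Shiryaev--Roberts domination: with $R_t^{\cA}:=\sum_{s=0}^{t-1}\exp(Z_t^{\cA}-Z_s^{\cA})$ one has $\exp(\tilde Y_t^{\cA})\le R_t^{\cA}$, so $\tilde S_b\ge\cR_b:=\inf\{t: \sum_{\cA}p_{\cA}R_t^{\cA}\ge e^b\}$, and since $\sum_{\cA}p_{\cA}R_t^{\cA}-t$ is a $\Pro_\infty$-martingale, optional sampling (after checking that the truncation error vanishes when $\Exp_\infty[\cR_b]<\infty$) yields $\Exp_\infty[\tilde S_b]\ge\Exp_\infty[\cR_b]=\Exp_\infty[R_{\cR_b}]\ge e^b$. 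Finally, the per-excursion union bound over subsets that you propose for $S_b$ --- and whose cross-subset coupling you rightly flag as delicate --- is unnecessary: $S_b\ge\bar S_b$ pointwise, so $\Exp_\infty[S_b]\ge\Exp_\infty[\bar S_b]\ge e^b$ comes for free.
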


Although it was not needed in the  proof of the previous  theorem, it is important (and useful for computational purposes) to note that the worst-case scenario for all the above detection rules is  when  $\nu=0$. This is the content of the following proposition, whose proof can be found in the \hyperref[appen]{Appendix}.

\begin{proposition} \label{prop1}
    For any $b > 0$ and $\cA \in \cP$ we have
    \[
      i) \;  \cJ_{\cA}[S_b] = \Exp_{0}^{\cA}[S_b] , \quad ii) \;
        \cJ_{\cA}[\bar{S}_b] = \Exp_{0}^{\cA}[\bar{S}_b]
      ,  \quad   iii)    \;  \cJ_{\cA}[\tilde{S}_b] = \Exp_{0}^{\cA}[\tilde{S}_b] .
    \]

   \end{proposition}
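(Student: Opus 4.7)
The plan is to prove, for each of the three rules separately, a pathwise domination of the form ``the statistic at time $\nu+t$ is at least as large as the corresponding statistic at time $t$ when restarted fresh at $\nu$''. Fix $\cA\in\cP$, $\nu\ge 0$, and a history $\cF_\nu$. For each rule I will build ``primed'' statistics that depend only on the post-$\nu$ observations $X_{\nu+1},X_{\nu+2},\ldots$ and are started from the ``cold'' state at time $0$. Under $\Pro^{\cA}_{\nu}$, conditional on $\cF_\nu$, these primed statistics have the same joint law as the original statistics under $\Pro^{\cA}_{0}$, so the associated primed stopping time has distribution $\Pro^\cA_0\circ S^{-1}$. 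Thus, once pathwise domination yields $S_b-\nu\le S'_b$ a.s.\ on $\{S_b>\nu\}$, the tower property gives $\Exp^\cA_\nu[(S_b-\nu)^+\mid \cF_\nu]\le \Exp^\cA_0[S_b]$, and since the right-hand side does not depend on $\nu$ or the history, $\cJ_\cA[S_b]\le\Exp^\cA_0[S_b]$. The reverse inequality is trivial (take $\nu=0$ and $\cF_0$ trivial). The same template yields (ii) and (iii).

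The heart of the argument is therefore the three pathwise comparisons. For (i), set $Y^{'\cA}_0:=0$ and $Y^{'\cA}_{t}:=\max\{Y^{'\cA}_{t-1}+\ell^\cA_{\nu+t},0\}$; then a one-line induction using the recursion \eqref{recursion2} (in its non-negative form) and the fact that $Y^\cA_\nu\ge 0$ shows $Y^\cA_{\nu+t}\ge Y^{'\cA}_t$ for every $t\ge 0$ and every $\cA\in\cP$, hence $\max_\cA(Y^\cA_{\nu+t}+\log p_\cA)\ge \max_\cA(Y^{'\cA}_t+\log p_\cA)$, giving $S_b-\nu\le S'_b$ on $\{S_b>\nu\}$. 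For (ii), write $Z^{'\cA}_t:=\sum_{r=1}^{t}\ell^\cA_{\nu+r}$ and use
\[
\max_{0\le s\le \nu+t}\sum_{\cA\in\cP}p_\cA e^{Z^\cA_{\nu+t}-Z^\cA_s}
\;\ge\;
\max_{\nu\le s\le\nu+t}\sum_{\cA\in\cP}p_\cA e^{Z^\cA_{\nu+t}-Z^\cA_s}
\;=\;
\max_{0\le u\le t}\sum_{\cA\in\cP}p_\cA e^{Z^{'\cA}_t-Z^{'\cA}_u},
\]
which immediately implies $\bar{S}_b-\nu\le \bar{S}'_b$. For (iii), write $\tilde Y^{'\cA}_t:=Z^{'\cA}_t-\min_{0\le u<t}Z^{'\cA}_u$; since enlarging the index set in the minimum only decreases it,
\[
\tilde Y^\cA_{\nu+t}\;=\;Z^\cA_{\nu+t}-\min_{0\le s<\nu+t}Z^\cA_s\;\ge\; Z^\cA_{\nu+t}-\min_{\nu\le s<\nu+t}Z^\cA_s\;=\;\tilde Y^{'\cA}_t\qquad (t\ge 1),
\]
so the monotonicity of $x\mapsto e^x$ and summation over $\cA$ yield $\tilde S_b-\nu\le \tilde S'_b$.

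Having established the three pathwise inequalities, the remaining work is routine: conditional on $\cF_\nu$ and under $\Pro^\cA_\nu$, the sequence $(X_{\nu+t})_{t\ge 1}$ is i.i.d.\ with density $f(\cdot\,|\,\theta)$ for the post-change parameter vector associated to $\cA$, so the primed statistics are measurable functions of a sequence distributed as under $\Pro^\cA_0$; in particular each primed stopping time has the same law as the corresponding unprimed rule under $\Pro^\cA_0$ starting from scratch. Taking $\Exp^\cA_\nu[\,\cdot\,\mid\cF_\nu]$ in each pathwise bound and then $\sup_{\nu\ge0}\esssup$ gives the desired upper bounds, matched trivially by the $\nu=0$ choice.

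The main obstacle, if any, is the comparison for $\tilde S_b$, because the CUSUM statistic $\tilde Y^\cA$ is allowed to be negative; one must be slightly careful that the domination still holds for $t\ge 1$ (it does, as shown above, and the case $t=0$ is immaterial since $\tilde S_b\ge 1$). Everything else is a bookkeeping exercise once the correct ``reset coupling'' is written down.
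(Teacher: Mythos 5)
Your proof is correct and follows essentially the same route as the paper's: in all three cases the key step is that the running detection statistic at time $\nu+t$ pathwise dominates the statistic restarted ``cold'' at $\nu$ (the paper phrases this for $S_b$ and $\tilde S_b$ as monotonicity in the initial values $Y_\nu^{\cB}$, $\tilde Y_\nu^{\cB}$, and for $\bar S_b$ via exactly your restriction of the maximum to $\nu\le s\le t$), after which the restarted rule has the $\Pro_0^{\cA}$-law conditionally on $\cF_\nu$. Your uniform min-representation for $\tilde Y^{\cA}$ even streamlines the paper's two-case analysis ($\tilde Y_\nu^{\cB}<0$ versus $\tilde Y_\nu^{\cB}\ge 0$) for part (iii).
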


\subsection{Implementation} \label{implement}

The two representations of the  GLR-CUSUM, \eqref{eq:glr_cusum:cusum} and
\eqref{eq:glr_cusum:llr}, suggest two possible approaches regarding its implementation. According to the first one,  at each time $t$  we compute and maximize over the CUSUM statistics,  $Y_{t}^{\cA}$, $\cA \in \cP$. This approach requires
running  $|\cP|$ recursions and is  convenient when the cardinality of $\cP$ is small. This is for example the case when only one component of the parameter vector $\theta$ is affected by the change  ($\cP=\cP_{1}$), in which case
$|\cP|=N$. However, such an implementation may not be feasible for large  values of $N$ when we have  complete ignorance regarding the affected subset ($\cP=\overline{\cP}_{N}$).  (This approach  also applies to the  mixture-based CUSUM rule  defined in  \eqref{eq:mixture_cusum:cusum}, which sums over the exponents of all CUSUM statistics).

According to the second approach,  at each time $t$ we need to compute and maximizer over  the statistics
$ Z_{t}^{\cA} - Z_{s}^{\cA} +\log p_\cA$, $0 \leq s <t$, $\cA \in \cP$.
In the next section  we will see  that this computation is simplified
in the special case that we have  independent sensors and only a subset of which is affected by the change. Moreover, in the next lemma we show that, without any loss,   we can always restrict the maximization to an adaptive time-window. Since a similar approach applies to the mixture-based CUSUM  defined in  \eqref{eq:mixture_cusum:llr}, we state the following lemma in some generality.

% thewhile it seems However, it is clear that this approach has the disadvantage that the number of computations grows as time increases.

\begin{lemma} \label{lem2}

Let $\cP$ be an arbitrary class and  consider the  following sequence of stopping times
\begin{align} \label{eq:window:mei}
\begin{split}
    r_{n + 1}
 % &= \stopset{t > r _n }{Y_t^{\cA } = 0  \quad \text{for all $\cA \in \cP $} }  \\
    &= \stopset{t > r _n }{Z_t^{\cA } = \min _{0 \leq s \leq t} Z_s^{\cA } ,\quad \text{for all $\cA \in \cP $}};   \; r_0 = 0.
  \end{split}
\end{align}
Then for any function  $G : \Reals^{|\cP |} \to \Reals $ that is non-decreasing in each of its arguments,
\begin{align*}
    \stopset{t \geq 0}{\max _{0 \leq s \leq t} G\left((Z_t^{\cA } - Z_s^{\cA })_{\cA \in \cP }\right) \geq b}& \\
    = \stopset{t \geq 0}{\max _{r(t) \leq s \leq t} G\left((Z_t^{\cA  } - Z_s^{\cA })_{\cA \in \cP }\right) \geq b}&,
\end{align*}
where  $(\cdot )_{\cA \in \cP }$ denotes a $\cP $-tuple and     $r(t) = \max \cset[:]{r_n}{r_n \leq t}$.
\end{lemma}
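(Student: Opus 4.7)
The plan is to show that, at every time $t$, the maximum of $G\bigl((Z_t^{\cA} - Z_s^{\cA})_{\cA \in \cP}\bigr)$ over $0 \leq s \leq t$ is already attained at some $s \geq r(t)$. Once this is established, the two stopping sets in the displayed identity coincide for every $t$, so the corresponding infima agree pathwise, which is all the lemma asserts.

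The first step is to extract the key monotonicity consequence of the definition \eqref{eq:window:mei}: if $r_n \leq t$ then $Z_{r_n}^{\cA} = \min_{0 \leq s \leq r_n} Z_s^{\cA}$ for every $\cA \in \cP$ simultaneously. In particular, for any $0 \leq s \leq r(t)$ and every $\cA \in \cP$ we have
\[
Z_s^{\cA} \;\geq\; Z_{r(t)}^{\cA},
\qquad\text{hence}\qquad
Z_t^{\cA} - Z_s^{\cA} \;\leq\; Z_t^{\cA} - Z_{r(t)}^{\cA}.
\]
This is a coordinatewise inequality between the two $\cP$-tuples indexed by $\cA$. It is precisely the simultaneity of the minima (built into the definition of the sequence $\{r_n\}$) that makes the inequality hold in every coordinate at once; without this, the argument would fail, and this is the only substantive point in the proof.

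Next, because $G$ is non-decreasing in each of its arguments, the coordinatewise inequality above yields
\[
G\bigl((Z_t^{\cA} - Z_s^{\cA})_{\cA \in \cP}\bigr) \;\leq\; G\bigl((Z_t^{\cA} - Z_{r(t)}^{\cA})_{\cA \in \cP}\bigr)
\]
for every $0 \leq s \leq r(t)$. Consequently,
\[
\max_{0 \leq s \leq t} G\bigl((Z_t^{\cA} - Z_s^{\cA})_{\cA \in \cP}\bigr)
\;=\;
\max_{r(t) \leq s \leq t} G\bigl((Z_t^{\cA} - Z_s^{\cA})_{\cA \in \cP}\bigr),
\]
since the right-hand side already includes $s = r(t)$, which dominates every $s < r(t)$.

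Finally, because the two maxima agree for every $t \geq 0$, the events $\{\max_{0 \leq s \leq t} G(\cdots) \geq b\}$ and $\{\max_{r(t) \leq s \leq t} G(\cdots) \geq b\}$ coincide, and so do their first-hitting times. No probabilistic input is needed; the identity is a pathwise combinatorial fact driven entirely by the joint minimality property built into \eqref{eq:window:mei}. The main (and only) obstacle is recognizing and verifying that joint minimality, rather than individual minimality for each $\cA$, is what the construction of $\{r_n\}$ supplies.
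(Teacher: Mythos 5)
Your proof is correct and follows essentially the same route as the paper's: you reduce the claim to the coordinatewise inequality $Z_{r(t)}^{\cA}\leq Z_s^{\cA}$ for all $s\leq r(t)$ and all $\cA\in\cP$, which is exactly the joint minimality built into the definition of $r(t)$, and then invoke the monotonicity of $G$. Your write-up is merely more explicit about why the simultaneity of the minima across all $\cA$ is the essential point; no substantive difference from the paper's argument.
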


\begin{proof}
 Fix $t > 0$.  Since $G$ is non-decreasing in each of its  arguments, it suffices to show that for every  $0 \leq s\leq r(t)$ and  $\cA \in \cP$  we have $ Z_t^{\cA } - Z_{s}^{\cA } \leq Z_t^{\cA } - Z_{r(t)}^{\cA }$, or equivalently
  $Z_{r(t)}^{\cA } \leq  Z_{s}^{\cA }$, which follows from the
 definition of $r(t)$.
\end{proof}

\section{A special multisensor sequential change-detection problem} \label{sec:special_problem}

\subsection{A special case of the general framework}

In this section,  we focus on the special case that the change affects the distribution of only  a subset of  sensors. In order to be more specific, let us  recall that at each time $t$ we observe a random vector $X_t=(X_t^1, \ldots, X_t^{K})$ with joint density
$X_t \sim f( \cdot|\theta_t)$, where  $\theta_t=(\theta_t^1, \ldots, \theta_t^N)$ is an unobserved parameter vector. Let  $f_k$  be  the marginal density of the $k^{\text{th}}$ stream, i.e., $X_t^k \sim f_k( \cdot| \theta_t)$.  In what follows, we assume that the dimensions of  $X_t$ and   $\theta_t$ coincide, i.e., $N=K$, and that each marginal density $f_k( \cdot| \theta_t)$  is determined only by the $k^{\text{th}}$ component of  $\theta_t$, $\theta_t^k$. Specifically, we assume that
\begin{align*}
    f_k(\cdot|\theta_t^k=0) &=h_k, \quad
    f_k(\cdot|\theta_t^k=1) =g_k ,
\end{align*}
where $h_k$ and $g_k$ are completely specified densities with respect to some $\sigma$-finite measure  $\lambda_k$. Note that  $h_k$ and $g_k$ do not need to belong in the same parametric family,  our only assumption is that their Kullback-Leibler information number,
\begin{align} \label{local_KL}
    I_k:= \int \log \left( \frac{ g_{k} } { h_{k} }  \right) \; g_{k} \;  d\lambda_{k},
\end{align}
is positive and finite.  Therefore, in this setup  we have  a change in the marginal distributions of a subset of sensors, while the remaining ones remain completely unaffected, and the change detection problem \eqref{eq:setup:param_change} takes the form
\begin{align} \label{eq:setup:marginal_change}
    \begin{split}
        X_t^k &\sim h_k , \quad  \forall \;  1 \leq k \leq K,  \quad t \le \nu,  \\
        X_t^k &\sim \begin{cases}
            g_k , & \quad k \in \cA, \\
            h_k , &  \quad k \notin \cA,
        \end{cases}
        \quad t > \nu,
    \end{split}
\end{align}
where $\cA \subset \{1, \ldots, K\}$ is a subset of sensors that belongs to some class $\cP$. When in particular $\cP=\cP_L$ (resp. $\cP=\overline{\cP}_L$), the change affects exactly (resp. at most)  $L$  sensors, where the classes  $\cP_L$ and $\overline{\cP}_L$ are defined in \eqref{eq:class_arl}.

In the literature of the multisensor quickest detection problem it is typically assumed that observations from different streams are independent. However, this is not necessary for the results of the previous section to hold. This is  illustrated by the following example. \\

%\newpage
\noindent \textit{Example: Correlated normal streams}

Let $\Sigma$ be an invertible covariance matrix of dimension $K$ with
diagonal $(\sigma_1^{2}, \ldots, \sigma_K^2)$ and let  $\mu_1, \ldots, \mu_K$ be non-zero constants. For every non-empty subset $\cA \subset \{1, \ldots, K\}$ we define  the $K$-dimensional vector
$\mu_\cA:= ( \mu_1^\cA, \ldots, \mu_K^\cA)$ such  that
\[
    \mu_k^\cA :=
    \begin{dcases}
       \mu_k  , &  \;  k \in \cA, \\
        0 ,     &  \;  k \notin \cA,
    \end{dcases}
\]
and we assume that under $\Pro_\nu^\cA$ we have
\[
    X_t    \sim
    \begin{dcases}
      \cN(0, \Sigma),        &t \le \nu, \\
      \cN(\mu_\cA, \Sigma),  &t > \nu .
    \end{dcases}
\]
Then, this is a special case of  \eqref{eq:setup:marginal_change}  with  $h_k= \cN(0, \sigma_k^2)$ and $g_k= \cN(\mu_k, \sigma_k^2)$, for which
\[
 \ell_t^\cA= \theta_{\cA} \cdot X_t- \frac{1}{2} \theta_{\cA} \cdot \mu_{\cA}, \quad \quad  \theta_{\cA} := \Sigma^{-1} \mu_{\cA},
\]
where   $\ell_t^\cA$ is   the log-likelihood ratio defined in \eqref{eq:log-likelihood_ratio}.

%Although the   results of the previous section apply to the multisensor quickest detection problem \eqref{eq:setup:marginal_change} even  when the sensors are correlated,  the  assumption of independent sensors provides a special structure  for practical implementation of the proposed rules.

\subsection{The case of independent sensors}
We now restrict ourselves to the case of independent sensors. Specifically, we assume that the local filtrations $\{\cF_t^1\}, \ldots, \{\cF_t^K\}$ are independent, where $  \cFt^k:=\sigma(X_s^k, \; 1 \leq s \leq t)$, $1 \leq k \leq K$.   Under this assumption, the log-likelihood ratio   statistic $\ell_t^{\cA}$, defined in \eqref{eq:log-likelihood_ratio},   admits the following decomposition
\begin{align} \label{eq:indep_llr_decomposition}
    \ell_t^{\cA} =  \sum_{k \in \cA} \ell_t^k, \quad
    \quad  \ell_t^k:= \log \frac{g_k}{h_k} (X_t^k),
\end{align}
which  implies that the likelihood and log-likelihood ratio statistics,  $\Lambda_t^{\cA}$ and  $Z_t^{\cA}$,  defined in \eqref{eq:likelihood_ratio}, take the form
\begin{align*}
    \Lambda_t^{\cA} &=  \prod_{k \in \cA} \Lambda_t^k, \quad
    \Lambda_t^k := \Lambda_{t-1}^{k} \, \exp\{ \ell_t^k \} , \quad \Lambda_0^k=1 , \\
          Z_t^{\cA} &=  \sum_{k \in \cA} Z_t^k, \quad Z_t^k:= Z_{t-1}^{k}+  \ell_t^k, \quad  Z_0^k=0.
\end{align*}
Moreover, we have $I_{\cA}=\sum_{k \in \cA} I_k$, where  $I_{\cA}$ is the Kullback-Leibler information number  defined in \eqref{eq:kl_distance} and $I_k$ the local Kullback-Leibler information number defined in \eqref{local_KL}.  If we further select weights of the form
\begin{align} \label{eq:weight}
    p_{\cA} &=\frac{ p^{|\cA|} }{\sum_{\cB \in \cP} p^{|\cB|}  },
\end{align}
where $p$ is some arbitrary positive parameter,  then from  \eqref{eq:glr_cusum:llr}  and \eqref{eq:mixture_cusum:llr} it follows that
the GLR-CUSUM  and the  mixture-based CUSUM rule, $\bar{S}_b$, can be expressed as follows:
\begin{align}
    S_b &:= \inf \left \{ t \in \bN:  \max_{0 \leq s \leq t} \, \max_{\cA \in \cP}  \sum_{k \in \cA} \left(Z_{t}^{k}-Z_{s}^{k} +\log p \right)  \geq
    b +\log (  C(\cP) )  \right\},  \label{eq:glr_cusum:factorized} \\
       \bar{S}_b  &:= \inf \left \{t \in \bN:   \max_{0 \leq s \leq t}   \sum_{\cA \in \cP} p^{|\cA|} \prod_{k \in \cA} \exp (Z_{t}^{k} - Z_{s}^{k})  \geq  e^{b} \, C(\cP)  \right\}, \label{eq:mixture_cusum:factorized}
\end{align}
where $C(\cP):=\sum_{\cB \in \cP} p^{|\cB|}$.  In the following proposition, whose proof is presented in the \hyperref[appen]{Appendix},  we show that the latter expressions can be  further simplified for classes of the form  $\cP_{L}$ and  $\overline{\cP}_{L}$.  In order to do so,   for any  $0 \leq s \leq t$ we set
\[
    Z_{s:t}^{k} := Z_{t}^{k}-Z_{s}^{k}, \quad   1 \leq k \leq K,
\]
and  we introduce the corresponding order statistics
%,  $Z_{s:t}^{(k)}$, $1 \leq k \leq K$:
$
    Z_{s:t}^{(1)} \geq \ldots \geq Z_{s:t}^{(K)}.
$

%Our main goal in this section is to reveal that in this context the GLR-CUSUM, $S_b$, and the mixture-based CUSUM, $\bar{S}_b$, defined in \eqref{eq:mixture_cusum:llr}, admit feasible representations.

%\vspace{0.3cm}

% However, for these classes, an alternative implementation is suggested by \eqref{eq:glr_cusum:impl_exact}--\eqref{eq:glr_cusum:impl_at_most}.The main difficulty with this approach is that the  number of computations at each time instant $t$ grows with time $t$.   This problem can be overcome in practice if the  number of maximizations  is limited to a recent interval in which the change is most likely to have  occurred; this approach is known as the regenerative likelihood ratio (RLR) strategy~\cite{yash1, yash2, yash3}.   To be more specific, consider the sequence of regeneration times
%Following the intuition of the previous section, we may select the weight so that  $p_{\cA} = (I_\cA \delta_\cA	^2)^{-1}$,  where  $\delta_\cA$ is defined similarly to $\delta_k$ in  \eqref{rd2}, with $\eta_k$ replaced by the overshoot of $Z^{\cA}$, instead of  $Z^{k}$. This  selection that guarantees that
%$
%\Exp_{\infty}[S^{\cA}] \sim \Exp_{\infty}[S^{\cB}]
%$
%for every $\cA, \cB \in \cP$. Nevertheless, it is not clear whether this specification equalizes the asymptotic relative performance loss, as in the case $\cP=\cP_1$.

\begin{proposition} \label{prop2}
Consider the change detection problem \eqref{eq:setup:marginal_change} and suppose that the independence assumption  \eqref{eq:indep_llr_decomposition}   holds.

(i)  The GLR-CUSUM in  \eqref{eq:glr_cusum:factorized} takes the  form
    \begin{align}
%        S_b &=
& \inf \Bigl\{ t \in \bN : \max _{0 \leq s \leq t}  \sum _{k = 1}^L  Z_{s:t}^{(k)}  \geq b +\log |\cP_{L}|    \Bigr\}  \quad  \text{when} \quad   \cP=\cP_{L},
 \label{eq:glr_cusum:impl_exact} \\
& \inf \left\{ t \in \bN : \max _{0 \leq s \leq t}  \sum _{k = 1}^L
            \left( Z_{s:t}^{(k)} + \log p \right)^{+} \geq
            b+ \log \left( \sum_{k=1}^{L} |\cP_k| \,  p^{k} \right) \right\}  \quad  \text{when} \quad   \cP=\overline{\cP}_{L}. \label{eq:glr_cusum:impl_at_most}
    \end{align}

(ii)     When $\cP=\overline{\cP}_{K}$, the mixture rule $\bar{S}_b$  in
    \eqref{eq:mixture_cusum:factorized} takes the form
    \begin{align} \label{eq:mixture_cusum:bayes}
        \hat{S}_{\hat{b}} (\pi)  &:= \inf \left \{ t \in \bN:
            \max_{0 \leq s \leq t}  \prod_{k=1}^{K}   \left[ 1-\pi +\pi \, \exp(Z_{t}^{k}-Z_{s}^{k})  \right]  \geq \exp\{\hat{b}\}  \right\},
    \end{align}
    where  $\pi=p/(1+p)$ and
    \[
        \exp\{\hat{{b}}\}:=    (1-\pi)^{-K}  \left( \exp\{b\} \sum_{\cB \in \cP} p^{|\cB|}+1 \right).
    \]

\end{proposition}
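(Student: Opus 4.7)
My plan is to handle the two statements of Proposition~\ref{prop2} separately, both starting from the factorized representations \eqref{eq:glr_cusum:factorized} and \eqref{eq:mixture_cusum:factorized}. The tools required are (a) interchanging the maxima over $\cA$ and over $s$, (b) identifying $\max_{\cA : |\cA| = m}\sum_{k \in \cA} x_k$ with the partial sum of the top $m$ order statistics of $(x_k)$, and (c) the elementary expansion $\sum_{\cA \subseteq \{1, \ldots, K\}}\prod_{k \in \cA} x_k = \prod_{k=1}^{K}(1 + x_k)$.

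For part (i), substituting $Y_t^{\cA} = \max_{0 \le s \le t}\sum_{k \in \cA} Z_{s:t}^{k}$ into \eqref{eq:glr_cusum:factorized} and swapping the two maxima rewrites the GLR-CUSUM as
\begin{align*}
S_b = \inf\Big\{t \in \bN : \max_{0 \le s \le t} \max_{\cA \in \cP} \sum_{k \in \cA}\bigl(Z_{s:t}^{k} + \log p\bigr) \ge b + \log C(\cP)\Big\}.
\end{align*}
When $\cP = \cP_L$, every $\cA$ has cardinality exactly $L$, so the inner maximum equals $\sum_{k=1}^{L} Z_{s:t}^{(k)} + L \log p$; the $L \log p$ cancels the term of the same form inside $\log C(\cP_L) = \log|\cP_L| + L \log p$, producing \eqref{eq:glr_cusum:impl_exact} (with the $p$-dependence washing out, as expected from the fact that the weights $p_{\cA}$ are uniform on $\cP_L$). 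When $\cP = \overline{\cP}_L$, set $b_k := Z_{s:t}^{k} + \log p$ and order $b_{(1)} \ge \cdots \ge b_{(K)}$; the inner maximum becomes $\max_{1 \le m \le L}\sum_{j=1}^{m} b_{(j)}$, which coincides with $\sum_{j=1}^{L} b_{(j)}^{+}$ whenever the latter is strictly positive. Positivity is guaranteed once $b + \log \sum_{m=1}^{L}|\cP_m| p^m > 0$, i.e.\ for $b$ in the asymptotic regime of interest, so \eqref{eq:glr_cusum:impl_at_most} follows.

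For part (ii), I expand the detection statistic in \eqref{eq:mixture_cusum:factorized} using tool (c): completing the sum over $\overline{\cP}_K$ with the missing empty-set term gives
\begin{align*}
\sum_{\cA \in \overline{\cP}_K} p^{|\cA|} \prod_{k \in \cA} e^{Z_{s:t}^{k}} = \prod_{k=1}^{K}\bigl(1 + p\, e^{Z_{s:t}^{k}}\bigr) - 1,
\end{align*}
so the inequality in \eqref{eq:mixture_cusum:factorized} is equivalent to $\prod_{k}(1 + p\, e^{Z_{s:t}^{k}}) \ge e^{b} C(\cP) + 1$. Substituting $p = \pi/(1-\pi)$ yields $1 + p\, e^{Z} = (1 - \pi + \pi\, e^{Z})/(1-\pi)$, and isolating the resulting factor $(1-\pi)^{-K}$ produces exactly the product form in \eqref{eq:mixture_cusum:bayes} with threshold $\exp\{\hat b\}$.

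The main obstacle is the subtlety in the $\overline{\cP}_L$ case of part (i): the identification of the constrained $\max_{\cA}$ with the sum of top-$L$ positive parts fails when every $b_{(j)}$ is negative, since the true maximum is then $b_{(1)} < 0$ while the sum of positive parts equals $0$. This does not affect the stopping time because the target threshold is positive for the relevant range of $b$, but the distinction must be flagged explicitly. Everything else reduces to algebraic bookkeeping with thresholds and order statistics.
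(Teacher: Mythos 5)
Your proof is correct and follows essentially the same route as the paper's: part (i) rests on the same order-statistics identification of $\max_{\cA}\sum_{k\in\cA}(Z_{s:t}^{k}+\log p)$, with the same observation that the all-nonpositive case cannot trigger a positive threshold, and part (ii) rests on the same expansion $\prod_{k=1}^{K}(1+p\,e^{Z_{s:t}^{k}})=1+\sum_{\cA\neq\emptyset}p^{|\cA|}\prod_{k\in\cA}e^{Z_{s:t}^{k}}$. The one caveat is that your (correct) algebra in part (ii) yields the threshold $(1-\pi)^{K}\bigl(e^{b}\sum_{\cB\in\cP}p^{|\cB|}+1\bigr)$, whereas the proposition prints $(1-\pi)^{-K}$ — a sign typo in the statement, since the paper's own proof likewise produces the factor $(1-\pi)^{+K}$ — so you should flag this discrepancy rather than assert that your computation ``produces exactly'' the displayed $\exp\{\hat{b}\}$.
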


\noindent \textit{Remark:}  From Lemma \ref{lem2} we know that we can replace $\max_{0\leq s \leq t}$  in  the above detection rules %\eqref{eq:glr_cusum:impl_at_most} and \eqref{eq:mixture_cusum:bayes}
 by  $\max_{r(t) \leq s \leq t}$, where $r(t)=\max\{n: r_{n} \leq t\}$ and the sequence of stopping times $(r_n)$ are defined in  \eqref{eq:window:mei}. In the case of independent sensors that consider in this section we have  $Y_t^\cA\leq \sum_{k \in \cA} Y_t^k$ for every $t$ and $\cA$,  which implies that for a class of the form $\overline{\cP}_{L}$ the times $(r_n)$ can be defined as follows:
\begin{align*} \nolabel{eq:window:mei2}
    r_{n}
   &= \stopset{t > r _{n-1} }{Z_t^{k } = \min _{0 \leq s \leq t} Z_s^{k}, \quad \forall \,  1 \leq k \leq K },
\end{align*}
where $r_0 := 0$. From Proposition 1 in \cite{mei_bio} it follows that $(r_n-r_{n-1})_{n \in \bN}$
are  iid random variables with  finite expectation under $\Pro_\infty$, which however  grows exponentially in $K$. This implies that the above window may not be very useful  for  computational purposes.  A more appropriate approach for practical implementation when $K$ is large is  to use instead the window
$
    \sigma(t):= \max\{ \sigma_n: \sigma_n \leq t\},
$
where %the  sequence of  random times  $(\sigma_n)_{n \in \bN}$ is defined as follows:
\begin{align*} \nolabel{eq:rlr_sigma}
    \sigma_n:=\inf \left\{t> \sigma_{n-1}: Z_t^k < Z_{\sigma_{n-1}}^k, \quad \forall \; 1 \leq k \leq K \right\}; \quad \sigma_0:=0.
\end{align*}
Again,  $(\sigma_n-\sigma_{n-1})_{n \in \bN}$ is a sequence of iid random variables with  finite expectation under $\Pro_\infty$, which however seems to  grow logarithmically in $K$ (based on empirical observations). However, we should emphasize that  the resulting detection rules are not equivalent to the original ones and their asymptotic performance requires separate analysis.  For similar adaptive window-limited modifications of CUSUM-type rules  we refer to Yashchin \cite{yash1, yash2, yash3}. \\
% an approach that  is  known as  regenerative likelihood ratio (RLR) strategy.

%Finally, we should note that one may  replace $\max_{0\leq s \leq t}$  with $\max_{t-m \leq s \leq t}$, where $m$ is a fixed integer. This fixed-window approach goes back to Willsky and Jones \cite{willjon} and Lai \cite{lai3}  in the context of CUSUM-type rules for non-iid observations.  \\
%However, The crucial question from a practical point of view is the selection of the window size, $m$. Roughly speaking, one would expect that the window size should be \change[at least of the order of the worst-case detection delay]{}, which suggests a window size of the  order $\calo( \log \gamma / I_\cA)$ when the affected subset is $\cA$. However, even this rough approximation, depends crucially on the size affected subset and an informative selection is not possible when there is great uncertainty regarding the (size of the) affected subset.

\noindent \textit{Remark:} When the weights are selected according to \eqref{eq:weight},  all subsets of the same size have the same weight. It is straightforward to generalize the previous results when  $p_{\cA}$ is proportional  to  $\prod_{k \in \cA}  p_k$, where  $p_k$, $1 \leq k \leq K$ are arbitrary positive parameters. However, setting $p_k=p$ offers an intuitive  way of selecting  the parameter $\pi$,  or equivalently $p$, in the mixture-based CUSUM,   $\hat{S}_{\hat{b}} (\pi) $, in \eqref{eq:mixture_cusum:bayes} Indeed,  $\hat{S}_{b}(\pi)$ can be obtained by  repeated application of the one-sided sequential  test
\begin{align} \label{eq:mixture_sprt:bayes}
    \begin{split}
        \hat{T}_{b}(\pi) &:= \inf \left\{ t \in \bN  :  \hat{\Lambda}_{t}  \geq e^{b} \right\},\quad \text{where} \\
        \hat{\Lambda}_{t} &:= \prod_{k=1}^{K}
            \left(1 - \pi + \pi \,  \Lambda_{t}^{k} \right).
    \end{split}
\end{align}
This is the one-sided SPRT for testing $\Pro_{\infty}$ against the auxiliary  probability measure
\[
    \hat{\Pro} =\times_{k=1}^{K} \left[(1-\pi) \Pro_\infty^k+ \pi \Pro_0^k \right],
\]
according to which the density in sensor $k$ is $g_k$ with probability $\pi$ and $h_k$ with probability $1-\pi$.   This implies that  the parameter $\pi$ in
$\hat{S}_{\hat{b}} (\pi) $  can be interpreted as the proportion of affected sensors and suggests setting  $\pi=L/K$ when we know in advance that exactly $L$ sensors are affected $(\cP=\cP_{L})$.   On the other hand, if  it is known  that \textit{at most} $L$ sensors are  affected   $(\cP=\overline{\cP}_{L})$, a reasonable default choice seems to be $\pi=L/(2K)$. \\

%\noindent \textit{Remark:}
%While the detection rule $\hat{S}_b(\pi)$, defined in  \eqref{eq:mixture_cusum:bayes}, coincides with the mixture-based CUSUM rule, $\bar{S}_b$,
%in the case of complete uncertainty regarding the affected subset $(\cP=\overline{\cP}_{K})$,  it is possible

%\noindent \textit{Remark:}
%It is interesting to note that  in the case of complete uncertainty regarding the affected subset ($\cP=\overline{\cP}_{K}$),  the GLR-CUSUM takes the following form:
%    \begin{align} \label{eq:glr_cusum:impl_no_info}
%        S_b &=  \inf \Bigl\{ t \geq 0 : \max _{0 \leq s \leq t}
 %           \sum _{k = 1}^K  \left( Z_{t}^{k}-Z_{s}^{k} + \log p \right)^{+}
 %           \geq b+\log \left( \sum_{k=1}^{K} |\cP_k| \,  p^{k}\right)  \Bigr\} .
%    \end{align}

\noindent \textit{Remark:}   It is  interesting to  compare the detection rule $\hat{S}_b$,  defined in  \eqref{eq:mixture_cusum:bayes}, with the detection rule
\begin{align} \label{eq:mixture_cusum:siegmund}
    \check{S}_{b} (\pi) &:= \inf \Bigl\{ t \in \bN : \max _{0 \leq s \leq t}
    \sum_{k=1}^{K}     \log \left(1 - \pi + \pi \, \exp\{ (Z_{t}^{k}-Z_s^k)^{+} \}  \right)\geq b\Bigr\},
\end{align}
which  was proposed by Xie and Siegmund \cite{xie}. In the following  theorem, whose proof is presented in the \hyperref[appen]{Appendix},  we show that the latter is also second-order asymptotically  optimal  with respect to $\overline{\cP}_{K}$ for any choice of $\pi$ in $(0,1]$, a result that explains the (empirically observed in \cite{xie})  robustness of this rule with respect to $\pi$.

\begin{theorem} \label{thm:second-order:bayes_sieg_cusum}
Consider the change detection problem \eqref{eq:setup:marginal_change} and suppose that the independence assumption  \eqref{eq:indep_llr_decomposition}   holds.   Consider the detection rules  $\hat{S}_b(\pi)$ with some $\pi \in (0,1)$ and $\check{S}_b(\pi)$ with some $\pi \in (0,1]$, defined in  \eqref{eq:mixture_cusum:bayes} and  \eqref{eq:mixture_cusum:siegmund}, respectively.  Then,
    \begin{enumerate}
        \item[(i)]   $\hat{S}_b(\pi) \in \ccg$ if  $b=\log \gamma$ and
            $\check{S}_b(\pi) \in \ccg$ if  $b=\log \gamma+ \log |\overline{\cP}_{K}|$;
        \item[(ii)]  if  also $\ell_1^\cA$ is square-integrable under $\Pro_0^{\cA}$ for every $\cA \in \overline{\cP}_{K}$,   both $\check{S}_b(\pi)$ and $\hat{S}_b(\pi)$  are  second-order asymptotically optimal with respect to  $\overline{\cP}_{K}$.
    \end{enumerate}
\end{theorem}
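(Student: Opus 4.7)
The plan is to deduce Theorem~\ref{thm:second-order:bayes_sieg_cusum} by reducing each of $\hat S_b(\pi)$ and $\check S_b(\pi)$ to an object already analyzed in Theorem~\ref{thm:second-order:glr_mix_cusum}. For Part (i), I will compare the detection statistic of each new rule with that of either the mixture rule $\bar S_b$ or the unweighted GLR-CUSUM $S_b$, which yields a pathwise inequality between the stopping times and thus transfers the ARL bound. For Part (ii), I will lower bound each detection statistic by a single term that is essentially the likelihood ratio tailored to the actual affected subset $\cA$, obtaining a pathwise domination $T^*\le S^{\cA}_{b_\cA}$ with $b_\cA=\log\gamma+\calo(1)$, and then invoke the second-order expansion $\Exp_0^\cA[S^{\cA}_{b_\cA}]=b_\cA/I_\cA+\Theta(1)$ recalled after~\eqref{eq:kl_distance}.

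\emph{Part (i).} By Proposition~\ref{prop2}(ii), $\hat S_{\hat b}(\pi)$ and $\bar S_b$ (with weights $p_\cA=p^{|\cA|}/C$ and $\pi=p/(1+p)$) are literally the same stopping time under an explicit threshold conversion relating $\hat b$ to $b$. Using the identities $C=\sum_{\cA\in\overline{\cP}_K}p^{|\cA|}=(1+p)^K-1$ and $(1-\pi)^{-K}=(1+p)^K=C+1$, one checks that setting $\hat b=\log\gamma$ corresponds to a value of $b$ satisfying $e^b\ge\gamma$, so Theorem~\ref{thm:second-order:glr_mix_cusum}(i) applied to $\bar S_b$ yields $\hat S_{\log\gamma}(\pi)\in\ccg$. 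For $\check S_b(\pi)$, the elementary inequality $\log(1-\pi+\pi e^x)\le x$, valid for $x\ge 0$, gives
\[
\sum_{k=1}^{K}\log\!\bigl(1-\pi+\pi\exp((Z_t^k-Z_s^k)^+)\bigr)\le\sum_{k=1}^{K}(Z_t^k-Z_s^k)^+=\max_{\cA\in\overline{\cP}_K}(Z_t^\cA-Z_s^\cA).
\]
Maximizing in $s$ shows that the $\check S_b$ detection statistic is dominated by $\max_{\cA\in\overline{\cP}_K}Y_t^\cA$, so $\check S_b(\pi)\ge\inf\{t:\max_\cA Y_t^\cA\ge b\}$, which is the unweighted GLR-CUSUM~\eqref{eq:glr_cusum:cusum} with effective threshold $b-\log|\overline{\cP}_K|$. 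Choosing $b=\log\gamma+\log|\overline{\cP}_K|$ reduces this to $\log\gamma$, and Theorem~\ref{thm:second-order:glr_mix_cusum}(i) yields $\check S_b(\pi)\in\ccg$.

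\emph{Part (ii).} Fix $\cA\in\overline{\cP}_K$. Keeping in the product only the factor $\pi\exp(Z_t^k-Z_s^k)$ for $k\in\cA$ and the factor $1-\pi$ for $k\notin\cA$ produces the one-term bound
\[
\prod_{k=1}^{K}\bigl(1-\pi+\pi e^{Z_t^k-Z_s^k}\bigr)\ge\pi^{|\cA|}(1-\pi)^{K-|\cA|}\,e^{Z_t^\cA-Z_s^\cA},
\]
while for $\check S_b$ the bounds $\log(1-\pi+\pi e^{x^+})\ge\log\pi+x$ on $k\in\cA$ and $\log(1-\pi+\pi e^{x^+})\ge 0$ on $k\notin\cA$ give
\[
\sum_{k=1}^{K}\log\!\bigl(1-\pi+\pi e^{(Z_t^k-Z_s^k)^+}\bigr)\ge|\cA|\log\pi+(Z_t^\cA-Z_s^\cA).
\]
Maximizing over $s$ on each side of both inequalities yields a pathwise domination $T^*\le S^{\cA}_{b_\cA}$, where $S^{\cA}_{b_\cA}$ is the CUSUM~\eqref{eq:cusum} targeted at $\cA$ with threshold $b_\cA=\log\gamma+c(\pi,K,\cA)$ and $c$ independent of $\gamma$. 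Monotonicity of $\cJ_\cA$ in the stopping rule together with the classical identity $\cJ_\cA[S^{\cA}_{b_\cA}]=\Exp_0^\cA[S^{\cA}_{b_\cA}]=b_\cA/I_\cA+\Theta(1)$ then gives $\cJ_\cA[T^*]\le\log\gamma/I_\cA+\calo(1)$, which is exactly the hypothesis required by Lemma~1(ii) under the square-integrability assumption.

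The main technical obstacle is the threshold bookkeeping in Part~(i) for $\hat S_b$: one must trace Proposition~\ref{prop2}(ii) carefully and exploit the specific algebraic identity $(1-\pi)^{-K}=C+1$ to confirm that calibrating $\hat S$ at $\hat b=\log\gamma$ really does correspond to $b\ge\log\gamma$ on the $\bar S_b$ side. The remaining steps are straightforward algebraic manipulations of the detection statistics, and the $\cA$-, $\pi$- and $K$-dependent constants appearing in Part~(ii) are harmless precisely because they do not depend on $\gamma$.
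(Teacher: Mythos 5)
Your proof is correct, and for $\check S_b(\pi)$ it is essentially the paper's own argument: the statistic is dominated by $\sum_k(Z_t^k-Z_s^k)^+$, so $\check S_b(\pi)$ is bounded below by the unweighted GLR-CUSUM at threshold $b-\log|\overline{\cP}_K|$, and Theorem \ref{thm:second-order:glr_mix_cusum}(i) does the rest. Where you genuinely diverge is in the treatment of $\hat S_b(\pi)$ and in the delay bound. For the ARL of $\hat S$, the paper argues directly: $\hat S_b(\pi)$ is the repeated application of the one-sided SPRT $\hat T_b(\pi)$ of \eqref{eq:mixture_sprt:bayes}, and Lorden's Theorem 2 plus Wald's likelihood-ratio identity give $\Pro_\infty(\hat T_b(\pi)<\infty)\le e^{-b}$; you instead identify $\hat S_{\hat b}$ with $\bar S_b$ via Proposition \ref{prop2}(ii) and import Theorem \ref{thm:second-order:glr_mix_cusum}(i). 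Both are valid, and yours makes explicit that nothing beyond Theorem \ref{thm:second-order:glr_mix_cusum} is needed, but you should display the threshold computation rather than write ``one checks'': the identity $\prod_k(1-\pi+\pi\Lambda_t^k/\Lambda_s^k)=(1-\pi)^K\bigl[1+\sum_{\cB}p^{|\cB|}\Lambda_t^{\cB}/\Lambda_s^{\cB}\bigr]$ yields $e^{\hat b}=(1-\pi)^{K}\bigl(e^{b}C+1\bigr)=(e^{b}C+1)/(C+1)$, hence $e^{b}=\bigl((C+1)e^{\hat b}-1\bigr)/C\ge e^{\hat b}$; note that the formula as printed in Proposition \ref{prop2}(ii) has the exponent sign flipped ($(1-\pi)^{-K}$), and with that literal formula your claim $e^b\ge\gamma$ would be false, so the corrected conversion is essential to your route. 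For part (ii), the paper again invokes Lorden's Theorem 2 ($\cJ_\cA[\hat S_b]\le\Exp_0^\cA[\hat T_b]$ together with $\hat Z_t\ge Z_t^\cA+c_\cA$) and disposes of $\check S$ via $\check S_b\le\hat S_b$, treating $\pi=1$ as a separate case; you instead dominate each rule pathwise by the oracle CUSUM $S^\cA$ at a threshold shifted by a $\gamma$-free constant and use monotonicity of $\cJ_\cA$ plus $\Exp_0^\cA[S^\cA_b]=b/I_\cA+\Theta(1)$. Your single-factor lower bounds are the multiplicative form of the paper's $\hat Z_t\ge Z_t^\cA+c_\cA$, so the content is the same, but your version covers $\pi=1$ for $\check S$ without a separate case and replaces Lorden's delay inequality by the exact worst-case identity for the single-channel CUSUM.
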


\noindent \textit{Remark:}
We  stress that the parameter $\pi$ in the detection rule proposed by   Xie and Siegmund \cite{xie}, $\check{S}_{b} (\pi) $, \textit{cannot} be interpreted as the proportion of affected sensors. Indeed, if we set  $\pi=1$ in \eqref{eq:mixture_cusum:siegmund} we recover the  unweighted GLR-CUSUM  in the case of complete uncertainty, that is the detection rule  obtained by setting  $L=K$ and $p=1$ in \eqref{eq:glr_cusum:impl_at_most}.  On the other hand, if we set $\pi=1$ in $\hat{S}_b(\pi)$, defined in \eqref{eq:mixture_cusum:bayes},  we recover the optimal  CUSUM rule in the case that all sensors are affected, which is exactly what one would expect if $\pi$ is to be interpreted as the proportion of affected sensors.

\subsection{Scalable schemes}

We close this section by highlighting the connection between the GLR-CUSUM and the  SUM-CUSUM proposed by Mei  in \cite{mei_bio}. Recall that the detection statistic of the  \textit{unweighted GLR-CUSUM  in the case of complete ignorance}, that is the detection rule  obtained by setting  $L=K$ and $p=1$ in \eqref{eq:glr_cusum:impl_at_most}, is
\[
    \max _{0 \leq s \leq t} \sum _{k = 1}^K  \left( Z_{t}^{k}-Z_{s}^{k} \right)^{+}.
\]
 If we interchange max and sum  in  this statistic, we obtain  the sum of the local CUSUM statistics, $\sum _{k = 1} ^{K}  Y_t ^{k}$.  This is the detection statistic of  SUM-CUSUM, which was shown in \cite{mei_bio} to be \textit{first-order} asymptotically optimal with respect to $\overline{\cP}_{K}$, i.e., in the case of complete ignorance. The main advantage of this rule is that it is much  easier to implement than the corresponding GLR-CUSUM and  mixture-based CUSUM rules, as it requires running only $K$ recursions.  However, it is reasonable to expect that SUM-CUSUM should be less efficient, since  it is only first-order asymptotically optimal. This intuition will be corroborated by the results of a simulation study in Section \ref{sec:simulation}. \\

\noindent \textit{Remark:} If we  interchange, in a similar way,  max and product in the mixture-based CUSUM rule \eqref{eq:mixture_cusum:bayes}, we obtain
\begin{align*} \nolabel{eq:mixture_cusum:mix_mei}
    M_b(\pi) &:= \inf \Bigl\{ t \in \bN :
        \sum_{k=1}^{K}   \log \left(1 - \pi + \pi \, \exp\{Y_t^k\}  \right)\geq b \Bigr\}.
\end{align*}
This is another scalable rule  comparable with SUM-CUSUM, however  the parameter $\pi$ can no longer be interpreted as proportion of affected sensors.\\

\noindent \textit{Remark:}  From Mei \cite{mei_sympo} it follows that  the detection rule
\begin{align} \label{eq:mixture_cusum:mei}
    M_b(L) &= \inf \left\{ t\in \bN : \sum _{k = 1} ^{L} Y_t ^{(k)}  \geq b \right\},
    \end{align}
where
$
Y_t ^{(1)} \ge Y _t ^{(2)} \ge \cdots \ge Y_t ^{(K)},
$
is  \emph{first-order} asymptotically optimal with respect to $\overline{\cP}_{L}$ (at most $L$ sensors affected).  This rule should be compared with the  corresponding GLR-CUSUM and mixture-based CUSUM
that correspond to class $\overline{\cP}_{L}$.

%which means that the , as the second term in the second-order asymptotic optimality (and the resulting efficiency loss in practice) is the price one has to pay for obtaining a scalable multisensor detection rule. However, it remains an open problem to have a scalable detection rule that is also second-order asymptotically optimal. The detection rules we have discussed so far in this section have the disadvantage that the number of maximizations grows with time, which necessitates  in practice the use of a window-limited modification.

\section{Multichart CUSUM } \label{sec:one_affected}
In this section we remain  in the change-detection problem  \eqref{eq:setup:marginal_change}, where the change affects
only a subset of sensors. Moreover,  we assume that the various sensors
are independent, i.e.,  \eqref{eq:indep_llr_decomposition} holds. Our  focus will be on the  GLR-CUSUM, defined in \eqref{eq:glr_cusum}, in the special case that the change can affect exactly one sensor $(\cP= \cP_{1})$, where  it takes the form
\begin{align} \label{eq:multichart}
    S=\min_{1\leq k \leq K} S^{k}_{b_{k}},
\end{align}
each $b_k$ being a positive threshold.  %This rule can be implemented in a completely decentralized fashion, as in which each sensor runs a local CUSUM  and an alarm is raised  as soon as there is a local alarm.
The detection rule \eqref{eq:multichart}  is also known  as  \textit{multichart} CUSUM  in the literature of  statistical  quality control and has been well  studied (see, e.g.,  \cite{blazek, blazek2}), especially  in the case that all thresholds are equal.
However, unless the sensors are homogeneous in the sense that $h_k=h$ and $g_k=g$, it is intuitively clear that it should be preferable to have unequal thresholds. The results of Section \ref{sec:general_problem} imply  that there is a large family of non-identical thresholds for which the multichart CUSUM is second-order asymptotically optimal. Specifically, if we set
\begin{align} \label{eq:select2}
    b_k=b -\log p_k, \quad 1 \leq k \leq K,
\end{align}
where $b$ is determined by the false alarm constraint and the $p_k$'s are arbitrary positive constants that add up to 1 and do not depend on $\gamma$, then from Theorem \ref{thm:second-order:glr_mix_cusum} it follows
the resulting detection rule is second-order asymptotically optimal for any selection of the $p_k$'s.  Our first goal in this section is to show that there are reasonable, alternative threshold specifications with which the multichart CUSUM may  fail to  be  even first-order asymptotically  optimal. The second goal is to compare various choices for the $p_k$'s when the thresholds are selected according to \eqref{eq:select2}.

In order to establish these results, we will rely on some well known facts about the performance characteristics of the  multichart CUSUM. Thus, let $T^{k}_{b} = \inf \left\{ t \geq 0 : Z^{k}_{t} \geq b \right\}$ and let $\eta^{k}_b$ be the  corresponding overshoot, i.e.,
$
   \eta^{k}_b:= Z_{T^{k}_b} ^{k} - b.
$
Since $Z^{k}$ is a random walk with a finite second moment, the limiting expected overshoot is well defined and we have
\begin{align*} \nolabel{eq:rd}
    \rho_{k} &:= \lim_{b \rightarrow \infty} \EV_{0}^{k}[ \eta^{k}_b].
\end{align*}
Moreover,  we have the following representation  (see \cite[p.32]{wood} or \cite[p.44]{TNB_book2014}) for the expected infinum of $Z^{k}$ under $\Pro_0^k$:
\[
    \beta_{k} := \EV_{0}^{k} \left[\inf_{t \in \bN} Z_{t}^{k} \right]= \frac{\Exp_0^k[ (Z_1^k)^2]}{2 I_k}-\rho_k.
\]
Finally, we introduce the Laplace transform of the limiting distribution of the overshoot under $\Pro_0^k$:
\begin{align*} \nolabel{eq:rd2}
    \delta_{k} := \lim_{b \rightarrow \infty} \EV_{0}^{k}[ e^{-\eta^{k}_b}].
\end{align*}
Then, it is well known (see, e.g., \cite[p.471]{TNB_book2014})
that  as $b_k \rightarrow \infty$
\begin{align} \label{eq:delay_high}
    \cJ_{k}[S] = \Exp_{0}^{k}[S] \simeq \frac{b_k +\rho_{k}+ \beta_{k}}{\KLI_{k}},
\end{align}
where  $x \simeq y$ means $x-y=o(1)$. On the other hand,   under $\Pro_{\infty}$ we have  (see, e.g., \cite{khan},   \cite[p.467]{TNB_book2014})
% $S^{k}$ is asymptotically exponentially distributed with rate $e^{-a_k} \KLI_{k}  \delta_{k}^{2}$, so that   as  $a \rightarrow \infty$
%\begin{align}  \label{eq:aux:alarm}
%   \EV_{\infty}[ S^{k} (a)] &\sim \frac{e^{a_k}}{\KLI_{k} (\delta_{k})^{2}}.
%\end{align} from the independence between different sensors it follows that
\begin{align}  \label{eq:alarm2222}
    \EV_{\infty}[ S] &\sim \frac{1}{ \sum_{k=1}^{K} e^{-b_k}\KLI_{k} \delta_{k}^{2} }.
\end{align}
From  \eqref{eq:delay_high} it is clear that  if  the $b_k$'s are chosen proportionally to the Kullback-Leibler information numbers, i.e., $b_k \propto I_k$,  then the  first-order asymptotic performance of the  multichart CUSUM under the various scenarios is equalized, i.e.,
$\cJ_{k}[S] \sim \cJ_{m}[S]$ for every  $ 1 \leq k \neq m \leq K$.
However,  as we show in the  following proposition,  with  this threshold specification the  multichart CUSUM loses even its first-order  asymptotic optimality property with respect to $\cP_{1}$, unless the $I_k$'s are identical.

\begin{proposition} \label{prop:epic_fail}
    Suppose that $b_{k}= c_{\gamma} I_{k}$ for every $1 \leq k \leq K$,  where $c_\gamma$ is a constant that does not depend  on $k$ and is chosen so that the false alarm constraint be satisfied with equality, i.e., $\Exp_{\infty}[S]=\gamma$. Then, for every $1 \leq k \leq K$ we have:
    \[
    \lim_{\gamma \rightarrow \infty}   \frac{\cJ_{k}[S]}{\inf_{T \in \ccg} \; \cJ_{k}[T]} =     \frac{\KLI_{k}}{\min_{1 \leq j \leq K} \KLI_{j}}.
    \]
\end{proposition}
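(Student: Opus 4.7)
The plan is to combine the two classical asymptotics that the excerpt already states for the multichart CUSUM: the per-stream detection delay \eqref{eq:delay_high}, namely $\cJ_k[S]\simeq (b_k+\rho_k+\beta_k)/I_k$, and the mean time to false alarm \eqref{eq:alarm2222}, namely $\Exp_\infty[S]\sim 1/\sum_{j=1}^K e^{-b_j} I_j \delta_j^2$. The role of the threshold choice $b_k=c_\gamma I_k$ is that it makes the delay term essentially independent of $k$, while making the false alarm sum exponentially dominated by the stream(s) with the smallest Kullback--Leibler number. One then only needs to invert the false alarm asymptotic to find how fast $c_\gamma$ grows with $\gamma$, and compare with the oracle rate $\log\gamma/I_k$ from \eqref{eq:kl_distance}.

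First I would plug $b_k=c_\gamma I_k$ into \eqref{eq:delay_high} to get
\[
\cJ_k[S] \;=\; c_\gamma + \frac{\rho_k+\beta_k}{I_k} + o(1) \;\sim\; c_\gamma,
\]
provided $c_\gamma\to\infty$, which will be verified a posteriori. Next I would plug the same threshold choice into \eqref{eq:alarm2222}; letting $I_*:=\min_{1\le j\le K} I_j$ and $\cK_*:=\{j:I_j=I_*\}$, the dominant terms in the sum are those with $j\in\cK_*$, so
\[
\sum_{j=1}^K e^{-c_\gamma I_j} I_j \delta_j^2 \;\sim\; e^{-c_\gamma I_*}\!\!\sum_{j\in\cK_*} I_j \delta_j^2.
\]
Imposing $\Exp_\infty[S]=\gamma$ and taking logarithms then yields
\[
c_\gamma \;=\; \frac{\log\gamma}{I_*} + O(1),
\]
which in particular confirms $c_\gamma\to\infty$.

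Combining the two asymptotics, $\cJ_k[S]\sim \log\gamma/I_*$ for every $k$, while Lemma~1 (via \eqref{eq:kl_distance}) gives $\inf_{T\in\ccg}\cJ_k[T]\sim \log\gamma/I_k$. Taking the ratio and passing to the limit produces the claimed value $I_k/I_*$. The main obstacle, modest given the tools already in hand, is bookkeeping of the $o(1)$ and $\sim$ error terms: one must track that the overshoot/lower-bound corrections $(\rho_k+\beta_k)/I_k$ and the constant $\sum_{j\in\cK_*}I_j\delta_j^2$ both contribute only $O(1)$ after taking logarithms, so that they are washed out when divided by $\log\gamma$. A small auxiliary point is to note that $c_\gamma\to\infty$ must be argued (otherwise $\Exp_\infty[S]$ would stay bounded and the false alarm constraint $\gamma\to\infty$ could not hold), which justifies using the asymptotic forms \eqref{eq:delay_high} and \eqref{eq:alarm2222} throughout.
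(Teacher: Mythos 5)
Your proposal is correct and follows essentially the same route as the paper's proof: substitute $b_k=c_\gamma I_k$ into \eqref{eq:delay_high} and \eqref{eq:alarm2222}, isolate the dominant terms with $I_j=I_*$ in the false-alarm sum to deduce $c_\gamma=(\log\gamma)/I_*+\calo(1)$, and compare with the oracle rate $\log\gamma/I_k$. Your added remark that $c_\gamma\to\infty$ should be verified is a sensible (if minor) point the paper leaves implicit.
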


\begin{proof}
    Setting  $b_{k}= c_\gamma I_{k}$ in \eqref{eq:delay_high} and \eqref{eq:alarm2222} we obtain that
    $\cJ_{k}[S]= c_{\gamma} + \calo(1), $  where $\calo(1)$ is an asymptotically bounded term   as $\gamma \rightarrow \infty$, and
    \begin{align}  \label{eq:aux:alarm}
      \gamma=  \EV_{\infty}[ S] &\sim \frac{1} { \sum_{k=1}^{K} e^{-c_{\gamma} I_k} \KLI_{k} \delta_{k}^{2} }.
        \end{align}
     Let $I^{*}:= \min_{1 \leq j \leq K} \KLI_{j} $. Then \eqref{eq:aux:alarm} becomes
    \begin{align*}  \nolabel{eq:alarm22}
      \gamma \sim \frac{e^{c_{\gamma} I^{*}} } { \sum_{k: I_{k}=I_{*}}   \KLI_{k} \delta_{k}^{2}+ \sum_{k: I_{k}>I_{*}}  e^{-c_{\gamma} (I_{k}-I_{*}) }  \KLI_{k} \delta_{k}^{2}},
    \end{align*}
    which implies that
    $
      c_{\gamma}=(\log \gamma) / I_{*}+ \calo(1)
    $
    and  completes the proof. \\
\end{proof}

Let us now focus on  specification  \eqref{eq:select2} for the thresholds.
While  we have seen in Theorem \ref{thm:second-order:glr_mix_cusum} that
setting $b = \log \gamma $ guarantees  $S \in \ccg$,  it is clear that this choice can be very conservative, as it does not take into account  the particular pre/post-change  distributions. The asymptotic expression \eqref{eq:alarm2222}  can be used to  provide  a  much more accurate approximation for $\Exp_{\infty}[S]$, even though it may not guarantee that  $\Exp_{\infty}[S] \geq \gamma$. Specifically,  from  \eqref{eq:delay_high} and  \eqref{eq:alarm2222} it follows that if we select threshold $b$ in \eqref{eq:select2} as
\begin{align} \label{eq:approxa}
    b = \log \gamma +
        \log \Bigl( \sum_{k=1}^{K} p_{k} \KLI_{k} \delta_{k}^{2} \Bigr),
\end{align}
then   $ \Exp_{\infty}[S] \sim \gamma$ and
\begin{align} \label{eq:tilde_delay}
    \cJ_{k} [S]     & \simeq \frac{  \log \gamma + \log \Bigl( \sum_{j=1}^{K} p_{j} \KLI_{j} \delta_{j}^{2} \Bigr) -\log p_{k}+ \rho_{k}+ \beta_{k}}{ \KLI_{k}}.
\end{align}
We can then use this high-order approximation in order to obtain an intuitive selection of the  $p_{k}$'s. Indeed, such a selection can be obtained if we
equalize, up to a first order,  the asymptotic \textit{relative} performance loss under each scenario. To be more specific,   let us  set
\[
   \bar{\cJ}_{k} [S]:= \frac{\cJ_{k}[S] -  \cJ_{k}[S^{k}]}{\cJ_{k}[S^k]}, \quad 1 \leq k \leq K ,
\]
 where $S^{k}$ is the optimal  CUSUM test for detecting a change  in sensor $k$.  Assuming that the thresholds for the multichart CUSUM are selected according to \eqref{eq:select2} and \eqref{eq:approxa} and that $S^{k}$ is also designed so that $\Exp_{\infty}[S^{k}] \sim \gamma$, then from  \eqref{eq:tilde_delay}  we  obtain
\begin{align} \label{eq:tilde_delay3}
    \bar{\cJ}_{k}[S] &\simeq
    \frac{C_{k}(p)}   { \log \gamma  +\rho_{k}+ \beta_{k}+ \log (\KLI_{k} \delta_{k})^{2} } ,
\end{align}
and consequently
\[
    \bar{\cJ}_{k}[S]  \sim \frac{C_{k}(p)}{\log \gamma},
\]
where  $C_k(p)$ is defined as follows:
\begin{align*} \nolabel{eq:C}
    C_{k}(p):=
    \log \left(  \frac{  \sum_{j=1}^{K} p_{j}  \, \KLI_{j} \delta_{j} ^{2} }
    { p_{k} \, \KLI_{k}  \delta_{k}^{2} }   \right).
\end{align*}
This reveals that  $\bar{\cJ}_{k} [S] \sim  \bar{\cJ}_{m} [S]$  for every $1 \leq k \neq m \leq K$ when
\begin{align} \label{eq:select_pk}
    p_{k} \propto \left( I_{k}  \delta_{k}^{2} \right)^{-1}.
\end{align}
In fact, the latter specification was  proposed  in \cite[p.223]{tarta_multi} on the basis of the observation that it guarantees
$
\Exp_{\infty}[S^k_{b_k}] \sim \Exp_{\infty}[S^m_{b_m}],
$
for every $1 \leq  k \neq  m \leq K$.

% in \eqref{eq:select2},  since  the  false alarm constraint   is only used  to determine $b$.
%\noindent \textit{Remarks:}
%\begin{enumerate}
 %   \item
    %The threshold selection in (i) guarantees that $S \in \ccg$, however it is clear that it can be very conservative as it does not take into account  the particular pre/post-change  distributions. On the other hand,  the threshold selection in (ii) provides a  much more accurate approximation, but it does not guarantee that $\Exp_{\infty}[S] \geq \gamma$. \item

   %Note also that  from \eqref{eq:tilde_delay} it follows that an approximation, up to an asymptotically negligible term, for the  optimal performance when the change occurs in sensor $k$ is
 %       \begin{align} \label{eq:tilde_delay2}
  %          \cJ_{k} [S^{k}] & \simeq \frac{  \log \gamma + \log \left(I_{k} \delta_{k}^{2} \right) + \rho_{k}+ \beta_{k}}{ I_{k}}.
 %       \end{align}
%\end{enumerate}

An alternative  specification for the $p_k$'s has been proposed in  \cite[p.467]{TNB_book2014}, according to which thresholds are selected so that
$
I_{k} \cJ_{k}[S] \simeq  I_{m} \cJ_{m}[S].
$
for every $1 \leq  k \neq  m \leq K$. From  \eqref{eq:tilde_delay} it is clear that this is the case  when
\begin{align*} \nolabel{eq:kl_select}
    p_{k} \propto  e^{\rho_{k}+\beta_{k}}.
\end{align*}
Finally, if we want to have larger thresholds for sensors with larger Kullback-Leibler information numbers, which is the underlying logic  in  the threshold specification of Proposition \ref{prop:epic_fail}, we can set
\begin{align*} \nolabel{eq:inverse_kl_pk}
    p_{k} \propto 1 / \KLI_{k}.
\end{align*}

In the remainder of this section we  compare the effect that the above mentioned  specifications have on the  asymptotic  relative performance loss of a multichart CUSUM whose thresholds are selected according to \eqref{eq:select2} and \eqref{eq:approxa}. Specifically, we assume that $h_{k}=\cN(0,1)$ and $g_{k}=\cN(\theta_k,1)$, where $\theta_k \neq 0$.
%and $\cN(\mu, \sigma^2)$ denotes the Gaussian density with mean $\mu$ and variance $\sigma^{2}$. %That is, the mean in each sensor is  zero before the change and becomes non-zero after the change only in the affected sensor.
We  let  $K=2$, $\gamma=10^{3}$ and  we compute  $\bar{\cJ}_{1} [S]$ and  $\bar{\cJ}_{2} [S]$, based on the asymptotic approximation \eqref{eq:tilde_delay3}, fixing  the  post-change mean in the first sensor at  $\theta_{1}=1$, and letting  the post-change mean in the second sensor, $\theta_{2}$,  vary.   In this way, we  examine the relative performance loss  as the  relative magnitude of the anticipated change in the two sensors varies.  In the Gaussian case we have the following, easily computable expressions   (see, e.g., \cite[p.32]{wood})  for the renewal-theoretic quantities  that are included in this asymptotic approximation:
\begin{align*}
    \beta _{k} &= -  \sum_{n=1}^{\infty} \Bigl[ \frac{\theta_{k} }{\sqrt{n}} \, \phi ( c_{n,k} )
                      - \frac{\theta_{k}^{2}}{2} \, \Phi( -c_{n,k} )  \Bigr], \\
%    \rho_{k} &= 1 + \frac{\theta_{k}^{2}}{4} + \beta _{k}, \,\\
    \delta_{k} &= \frac{2}{\theta_{k}^{2}} \exp\Bigl\{\sum_{n=1}^{\infty} \frac{-2}{n} \, \Phi( -c_{n,k}) \Bigr\} ,
\end{align*}
where $c_{n,k}:= \theta_{k}\sqrt{n} / 2$.  The results of this computation are summarized  in Fig.~\ref{fig:theta_1_0_reloss}, where we see that selecting  the  $p_{k}$'s according to \eqref{eq:select_pk} leads to a  more robust behavior in comparison to the other specifications when $\theta_2$ differs significantly from $\theta_1$. Of course, all specifications
approach the behavior of  identical thresholds when  $\theta_2$ is close to $\theta_1$.

%
%\FloatBarrier
\begin{figure}[!htb]
    \centering
    \begin{subfigure}[b]{.5\linewidth}
        \centering
        \includegraphics[width=1.0\textwidth]{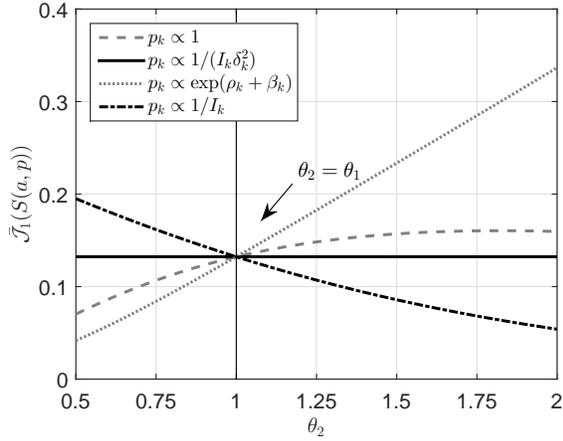}
        \caption{The change occurs in the \textit{first} sensor.}
        \nolabel{fig:theta_1_0_reloss:one}
    \end{subfigure}%
%\end{figure}
%\begin{figure}[!htb]
    \begin{subfigure}[b]{.5\linewidth}
        \centering
        \includegraphics[width=1.0\textwidth]{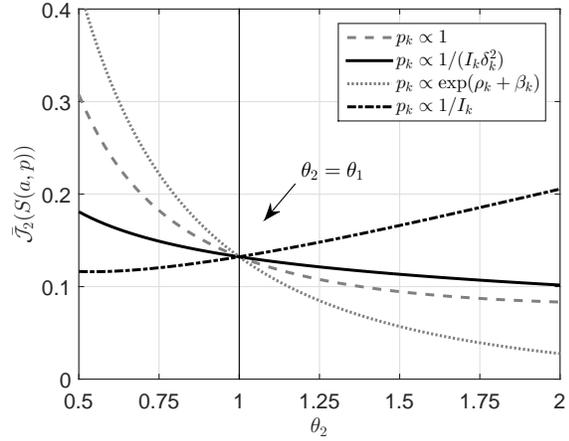}
        \caption{The change occurs in the \textit{second} sensor.}
       \nolabel{fig:theta_1_0_reloss:two}
    \end{subfigure}%
    \caption{The horizontal axis in both graphs represents $\theta_{2}$, the post-change mean in the second sensor when it is the one affected by the change. The post-change mean in the first sensor, $\theta_1$,
  when this is the one affected by the change,  is fixed to 1. The vertical axis in (a)  (resp. (b)) represents the asymptotic  relative performance loss of the multichart CUSUM when the first (resp. second) sensor is affected.}
    \label{fig:theta_1_0_reloss}
\end{figure}
%\FloatBarrier

\section{A simulation study} \label{sec:simulation}
\subsection{Description}

In this section  we present the results of a simulation study whose main goal is to compare the  GLR-CUSUM, $S$, given by  \eqref{eq:glr_cusum:cusum}, the
mixture-based CUSUM rules, $\tilde{S}$ and $\hat{S}$, given by \eqref{eq:mixture_cusum:cusum} and \eqref{eq:mixture_cusum:bayes}, respectively,  and the SUM-CUSUM, $M$, defined in \eqref{eq:mixture_cusum:mei}. We set $K = 5$ and $h_{k} = \cN (0, 1)$, $g_{k} = \cN (1, 1)$. That is, all sensors initially observe iid Gaussian observations with variance $1$, and at the time of the change the mean changes from $0$ to $1$ in an unknown subset of these sensors.  For our comparisons to be fair, we need to guarantee that all detection rules have access to the same amount of prior information. We consider the following regimes:%
\begin{inparaenum}[(i)]
    \item no prior information ($\cP = \overline{\cP}_{K})$;
    \item knowing that \emph{at most} $L$ sensors can be affected  ($\cP = \overline{\cP}_{L}$),%
    \ignore{\item knowing that \textit{exactly} $L$ can be affected  ($\cP = \cP_{L}$).}
\end{inparaenum}
where $1 \leq L \leq K$.

For the implementation of $S$ and $\tilde{S}$ we follow the \emph{first} approach described in Subsection \ref{implement}. For the implementation of $\hat{S}$ we use the \emph{second} approach, together with the   adaptive-window based on regeneration times  \eqref{eq:window:mei}.

 For all  detection rules we have considered in this work, the worst case scenario  is when the change occurs at time $0$. Thus,
when the actual affected subset is  $\cA $, in  order to compute the worst-case detection delay of each rule we simply need to simulate it  under $\Pro _0 ^{\cA }$.

In order to have a direct comparison of the various detection rules and also to illustrate the  second-order asymptotic optimality property, we plot the \emph{additional} number of observations  required by each rule in order to detect the change relative to the  optimal CUSUM test for which  the  affected subset is known in advance. Thus, if $T$ represents a generic detection rule and $\cA$ the true subset of affected sensors, we plot $\Exp_{0}^{\cA}[ T] - \Exp_{0}^{\cA}[S^{\cA}]$ against $\text{ARL}[T] := \Exp_\infty[T]$ for various threshold choices. In order to illustrate the first-order asymptotic optimality property, we plot the ratio $\Exp_{0}^{\cA}[ T]  /  \Exp_{0}^{\cA}[S^{\cA}]$ against $\text{ARL}[T] = \Exp_\infty[T]$ again for various threshold choices.   In Table \ref{table:performance:theta_1_00}, we present numerical  results for all detection rules when we choose their thresholds so that their target for the expected time to false alarm is $\gamma = 10^5$.
Standard errors are given based on $50,000$ Monte Carlo simulation runs.

%described in Subsection \ref{subsec:rlr}\ignore{; this window does not affect the stopping time $\hat{S}$}.

% A brief discussion on how this complexity can be mitigated for large number of sensors is given in Subsection \ref{subsec:adaptive:windows}.

\subsection{Results}
%In Table \ref{table:performance:theta_1_00}, we present numerical  results for all detection rules when we choose their thresholds so that their target for the expected time to false alarm is   $\gamma = 10^5$ are summarized in Table.
%Standard errors are given based on $50,000$ Monte Carlo simulation runs. However, the various detection rules can be directly compared in the gr

\subsubsection{No prior information}

We present the results that correspond to the case of  no prior information in Fig.~\ref{fig:no_prior:ratio}--Fig.~\ref{fig:no_prior:diff},
illustrating the notion of first-order and second-order  asymptotic optimality with respect to class \ $\overline{\cP }_K$, respectively.
We see that when $2$ sensors are affected, the mixture-based CUSUM procedures  are  slightly worse than the GLR-CUSUM,  whereas when $4$ sensors are affected, the mixture CUSUMs take the lead. In both cases the proposed schemes perform much better than SUM-CUSUM, $M(5)$, whose inflicted performance loss (relative to the optimal performance) increases much faster as $\gamma $ increases. Nevertheless, its ratio over the optimal performance decreases, which supports the result that SUM-CUSUM is \textit{first}-order asymptotically optimal.

\begin{figure}[!htb]
    \centering
    \begin{subfigure}[b]{.5\linewidth}
        \centering
        \includegraphics[width=1.0\textwidth]{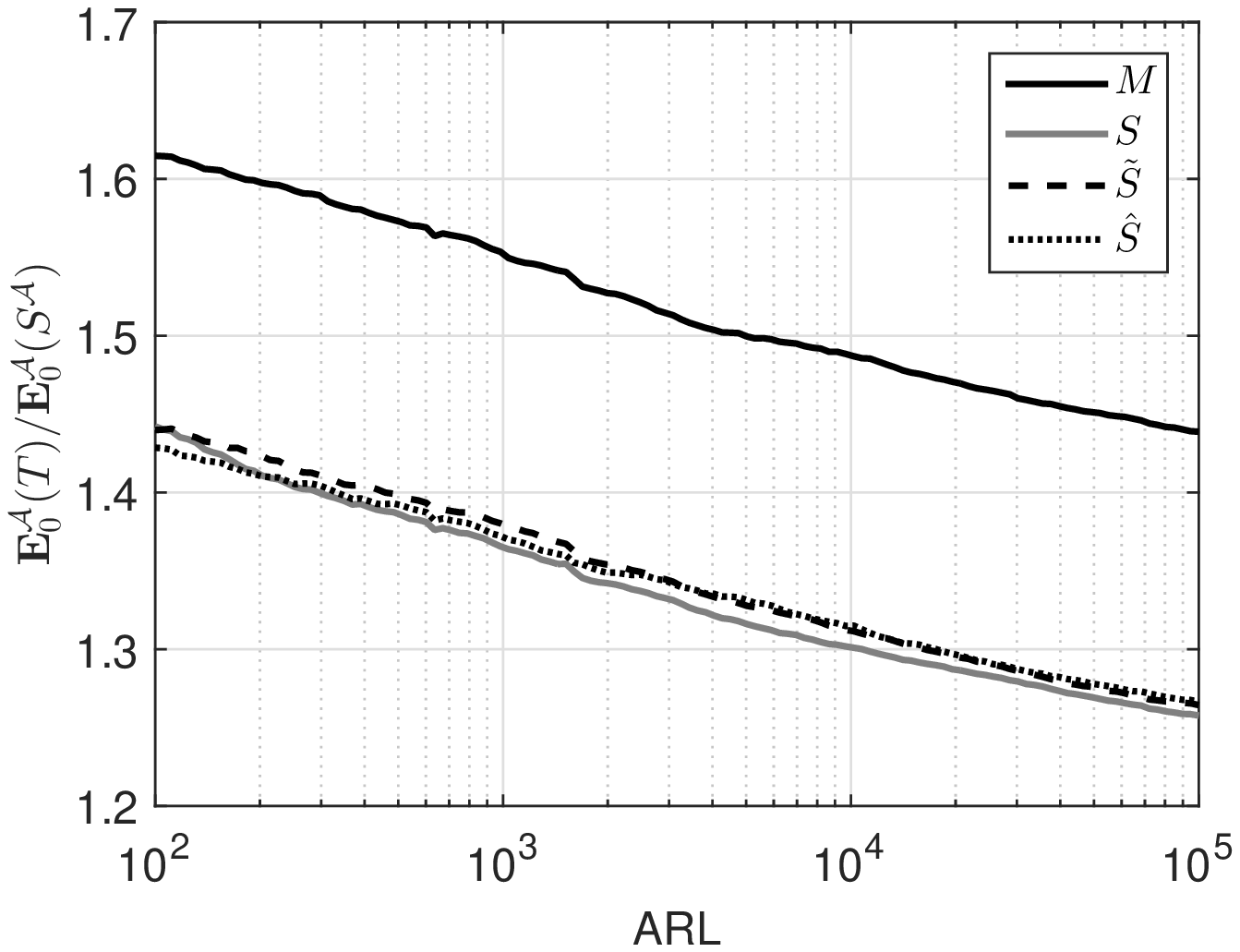}
        \caption{$2$ affected sensors.}
        \nolabel{fig:no_prior:ratio:2_affected}
    \end{subfigure}%
%\end{figure}
%
%\begin{figure}[!htb]
    \begin{subfigure}[b]{.5\linewidth}
        \centering
        \includegraphics[width=1.0\textwidth]{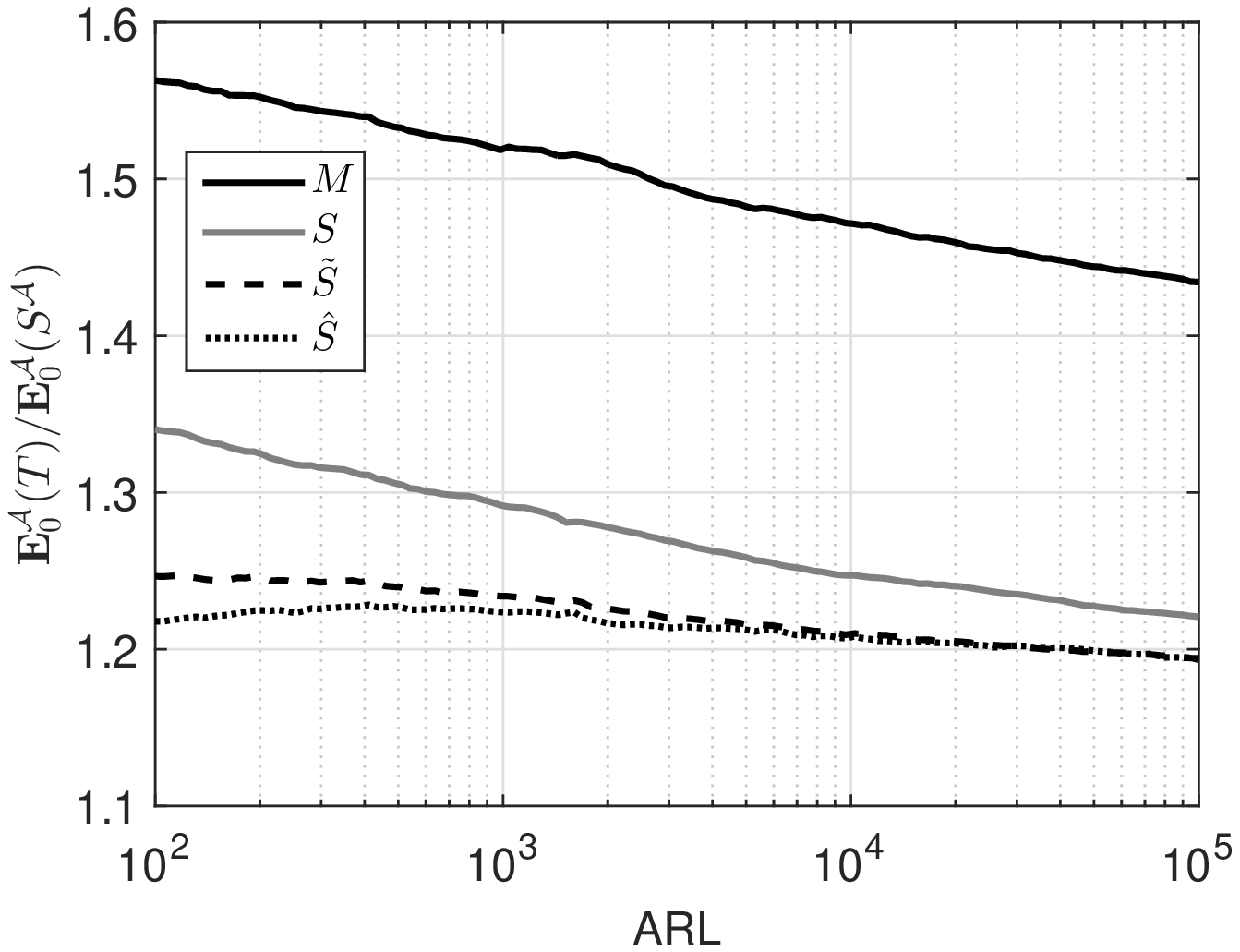}
        \caption{$4$  affected sensors.}
        \nolabel{fig:no_prior:ratio:4_affected}
    \end{subfigure}%
    \caption{ First-order asymptotic optimality is illustrated in the case of no prior information. The  performance ratio (vertical axis) of each rule is plotted against the expected time for a false alarm (horizontal axis).}

    \label{fig:no_prior:ratio}
\end{figure}
\begin{figure}[!htb]
    \centering
    \begin{subfigure}[b]{.5\linewidth}
        \centering
        \includegraphics[width=1.0\textwidth]{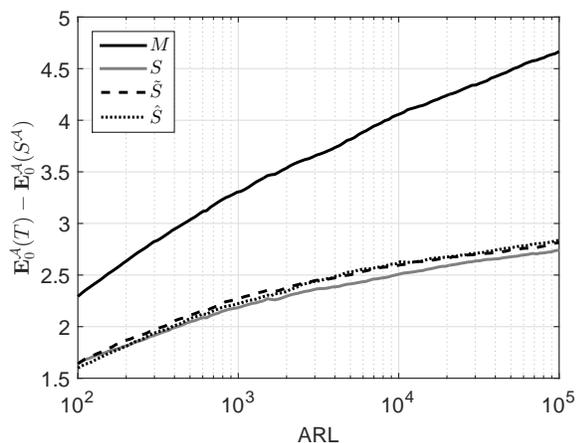}
        \caption{$2$ affected sensors.}
        \nolabel{fig:no_prior:diff:2_affected}
    \end{subfigure}%
%\end{figure}
%
%\begin{figure}[!htb]
    \begin{subfigure}[b]{.5\linewidth}
        \centering
        \includegraphics[width=1.0\textwidth]{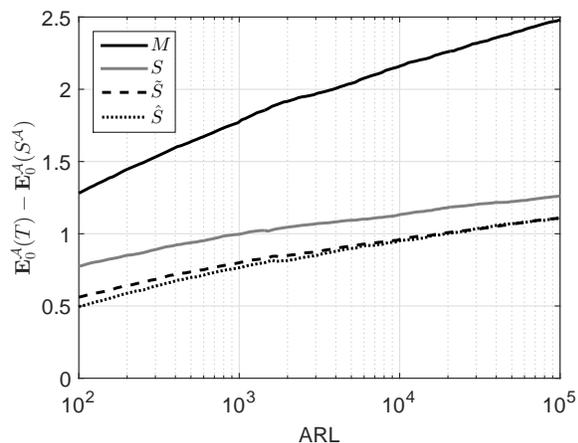}
        \caption{$4$  affected sensors.}
        \nolabel{fig:no_prior:diff:4_affected}
    \end{subfigure}%
    \caption{ Second-order asymptotic optimality is illustrated in the case of no prior information. The  performance loss (vertical axis) of each rule is plotted against the expected time for a false alarm (horizontal axis).}
    \label{fig:no_prior:diff}
\end{figure}
%\FloatBarrier

\subsubsection{An upper bound on the number of affected sensors is known}

In Fig.~\ref{fig:known_max:diff:max_3_affected}--\ref{fig:known_max:diff:max_4_affected}
we assume that  we know in advance that  \emph{at most} $L$ sensors can be affected and we compare  the GLR-CUSUM procedure, $S$, the mixture-based CUSUM procedures, $\tilde{S}$ and $\hat{S} (L / (2 K))$, and the scalable scheme $M(L)$. Again,  we see that $M(L)$ performs uniformly worse than the proposed schemes in all cases.
%As can be seen  in most cases the  performance of the mixture-based  procedures is on par with that of GLR-CUSUM. On the other hand, $M(L)$ performs much worse than the proposed schemes in all cases.

%
\begin{figure}[!htb]
    \centering
    \begin{subfigure}[b]{.5\linewidth}
        \centering
        \includegraphics[width=1.0\textwidth]{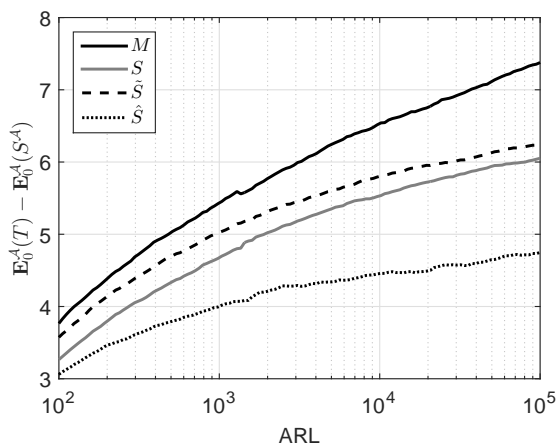}
        \caption{$1$ sensor actually affected }
        \nolabel{fig:known_max:diff:1_max_3_affected}
    \end{subfigure}%
%\end{figure}
%
%\begin{figure}[!htb]
    \begin{subfigure}[b]{.5\linewidth}
        \centering
        \includegraphics[width=1.0\textwidth]{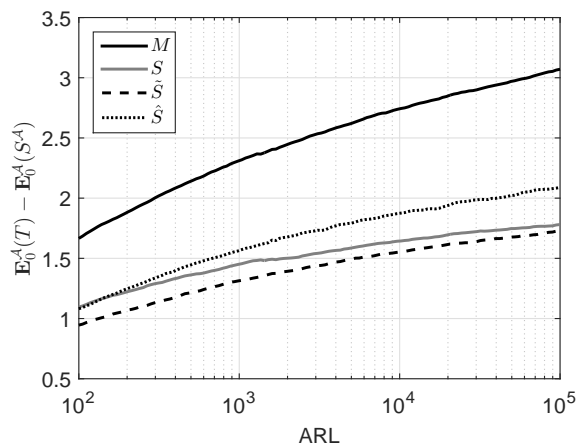}
        \caption{$3$  sensors actually affected}
        \nolabel{fig:known_max:diff:3_max_3_affected}
    \end{subfigure}%
    \caption{ Second-order asymptotic optimality  illustrated when it is known that \textit{ at most $3$ sensors may be affected}.  The  performance loss (vertical axis) of each rule is plotted against the expected time for a false alarm (horizontal axis).}
    \label{fig:known_max:diff:max_3_affected}
\end{figure}
\begin{figure}[!htb]
    \centering
    \begin{subfigure}[b]{.5\linewidth}
        \centering
        \includegraphics[width=1.0\textwidth]{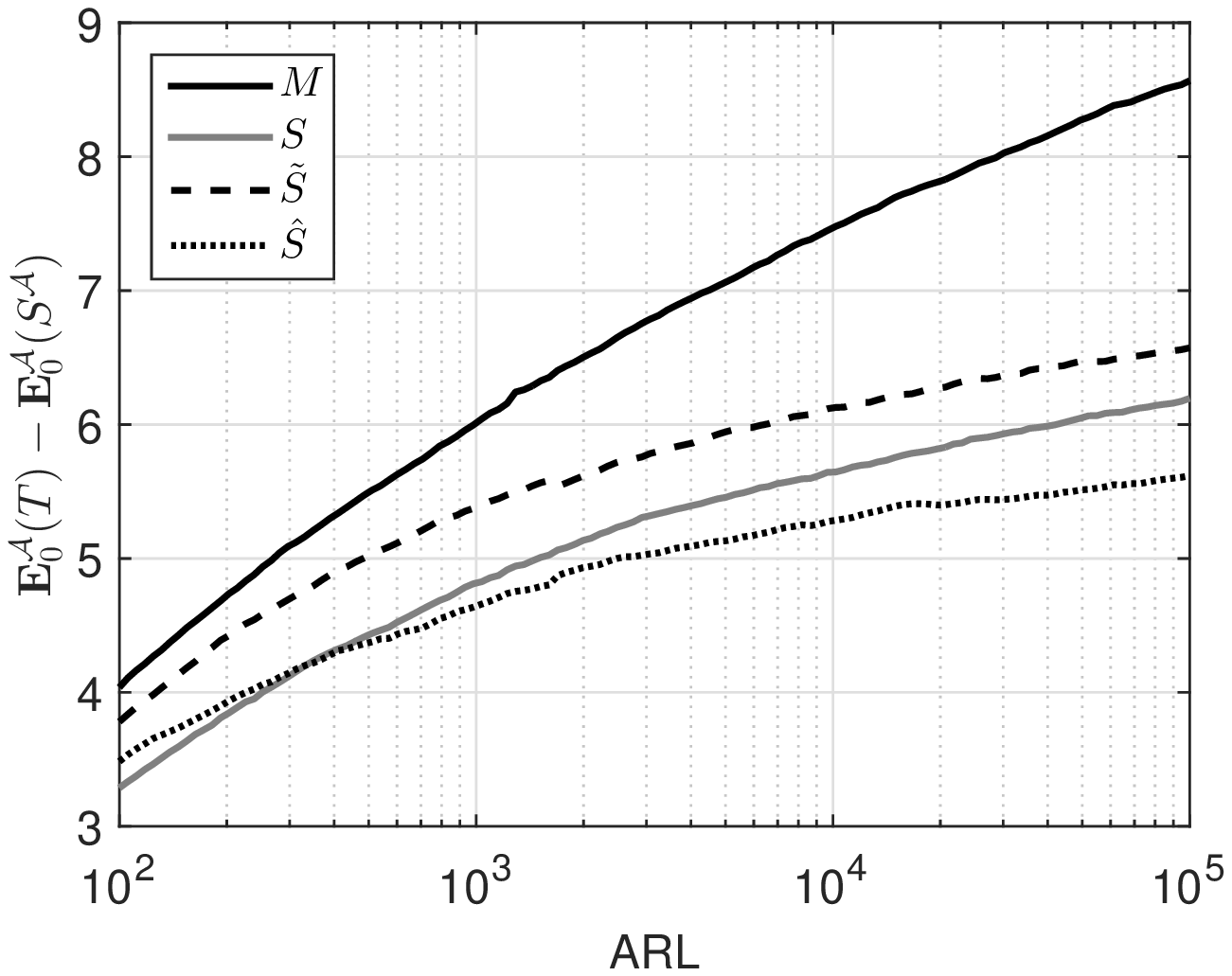}
        \caption{$1$  sensor actually affected}
        \nolabel{fig:known_max:diff:1_max_4_affected}
    \end{subfigure}%
%\end{figure}
%
%\begin{figure}[!htb]
    \begin{subfigure}[b]{.5\linewidth}
        \centering
        \includegraphics[width=1.0\textwidth]{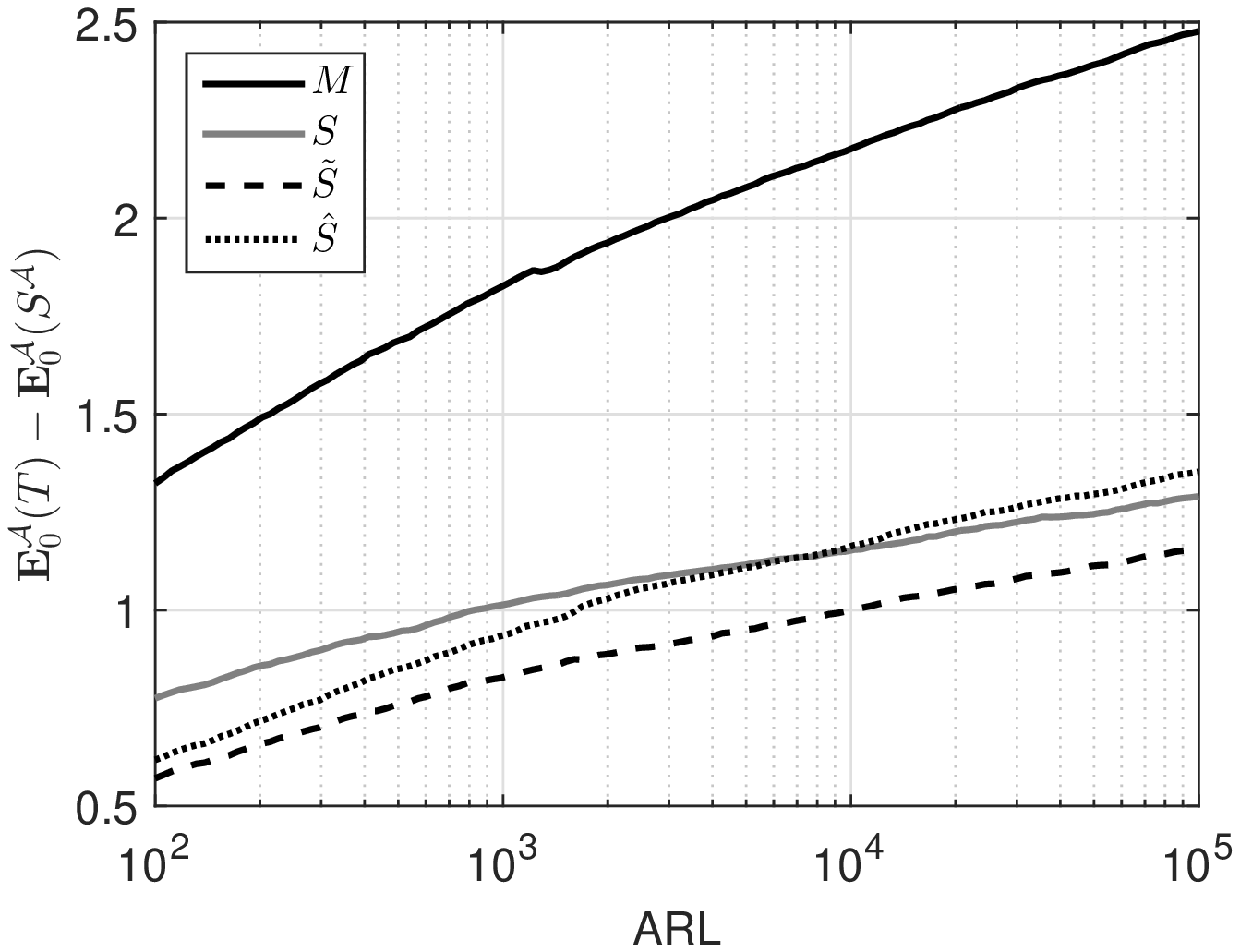}
        \caption{$4$ sensors actually affected}
        \nolabel{fig:known_max:diff:4_max_4_affected}
    \end{subfigure}%
    \caption{Second-order asymptotic optimality  illustrated when it is known that \textit{ at most $4$ sensors may be affected}.  The  performance loss (vertical axis) of each rule is plotted against the expected time for a false alarm (horizontal axis)}
    \label{fig:known_max:diff:max_4_affected}
\end{figure}
%\FloatBarrier

\begin{table}[!htb]
    \centering
    \caption{Performance of various procedures when $|\cA |$ sensors are affected. The thresholds are selected so that the expected time to false alarm is approximately $\gamma = 10^5$ .} \label{table:performance:theta_1_00}
    \hrule
    \begin{tabular}{l|c|c|cc|cc}
        & \hspace{-1em} ~$|\cA |$~ \hspace{-1em}  & $b$   & $\EV _\infty (T)$ & $\SE $  & $\EV _0 ^{\cA} (T)$ & $\SE $  \\ \hline
        $S_b^{\cA }$                         & $2$  & $9.88$ & $100090$ & $450$ & $10.64$ & $0.02$  \\
        $M_b(5)$                             & $2$  & $17.1$ & $100010$ & $455$ & $15.30$ & $0.03$  \\
        $M_b(2)$                             & $2$  & $14.2$ & $100065$ & $450$ & $14.21$ & $0.03$  \\
        $\hat{S} _b (0.5)$                   & $2$  & $9.85$ & $100065$ & $450$ & $13.47$ & $0.03$  \\
        $\hat{S} _b (0.2)$                   & $2$  & $9.35$ & $100105$ & $450$ & $13.57$ & $0.03$  \\
        $\tilde{S}_b(\overline{\cP } _{5})$  & $2$  & $9.91$ & $100105$ & $450$ & $13.45$ & $0.03$  \\
        $\tilde{S}_b(\overline{\cP } _{2})$  & $2$  & $9.86$ & $100115$ & $465$ & $13.12$ & $0.03$  \\
        $S_b(\overline{\cP } _{5})$          & $2$  & $9.58$ & $100005$ & $445$ & $13.38$ & $0.03$  \\
        $S_b(\overline{\cP } _{2})$          & $2$  & $9.78$ & $100005$ & $460$ & $13.15$ & $0.03$  \\ \hline

        $S_b^{\cA }$                         & $3$  & $9.94$ & $100065$ & $450$ & $7.369$ & $0.02$  \\
        $M_b(5)$                             & $3$  & $17.1$ & $100010$ & $455$ & $10.59$ & $0.02$  \\
        $M_b(3)$                             & $3$  & $15.9$ & $100025$ & $450$ & $10.44$ & $0.02$  \\
        $\hat{S} _b (0.5)$                   & $3$  & $9.85$ & $100065$ & $450$ & $9.040$ & $0.02$  \\
        $\hat{S} _b (0.3)$                   & $3$  & $9.63$ & $100075$ & $460$ & $9.458$ & $0.02$  \\
        $\tilde{S}_b(\overline{\cP } _{5})$  & $3$  & $9.91$ & $100105$ & $450$ & $9.054$ & $0.02$  \\
        $\tilde{S}_b(\overline{\cP } _{3})$  & $3$  & $9.90$ & $100115$ & $450$ & $9.098$ & $0.02$  \\
        $S_b(\overline{\cP } _{5})$          & $3$  & $9.58$ & $100005$ & $445$ & $9.136$ & $0.02$  \\
        $S_b(\overline{\cP } _{3})$          & $3$  & $9.67$ & $100060$ & $450$ & $9.150$ & $0.02$  \\ \hline

        $S_b^{\cA }$                         & $4$  & $9.93$ & $100010$ & $450$ & $5.716$ & $0.02$  \\
        $M_b(5)$                             & $4$  & $17.1$ & $100010$ & $455$ & $8.197$ & $0.02$  \\
        $M_b(4)$                             & $4$  & $16.8$ & $100050$ & $450$ & $8.192$ & $0.02$  \\
        $\hat{S} _b (0.5)$                   & $4$  & $9.85$ & $100065$ & $450$ & $6.821$ & $0.02$  \\
        $\hat{S} _b (0.4)$                   & $4$  & $9.75$ & $100050$ & $450$ & $7.068$ & $0.02$  \\
        $\tilde{S}_b(\overline{\cP } _{5})$  & $4$  & $9.91$ & $100105$ & $450$ & $6.826$ & $0.02$  \\
        $\tilde{S}_b(\overline{\cP } _{4})$  & $4$  & $9.91$ & $100070$ & $450$ & $6.870$ & $0.02$  \\
        $S_b(\overline{\cP } _{5})$          & $4$  & $9.58$ & $100005$ & $445$ & $6.977$ & $0.02$  \\
        $S_b(\overline{\cP } _{4})$          & $4$  & $9.60$ & $100010$ & $450$ & $7.006$ & $0.02$  \\ \hline
    \end{tabular}
    \hrule
\end{table}

\ignore{
\subsection{Remarks on adaptive window} \label{subsec:adaptive:windows}

Recall the regeneration times $\{ r_n \}_{n \in \Naturals }$ defined in \eqref{eq:window:mei}, and let $r(t) = \max \cset[:]{r_n}{r_n \leq t}$ denote the latest observed regeneration point. It is not difficult to see that
\begin{align*}
    \hat{S}_b &:= \stopset{t \geq 0}{\max _{0 \leq s \leq t} \prod_{k=1}^{K}   \left[ 1-\pi +\pi \, \exp(Z_{t}^{k}-Z_{s}^{k})  \right] \geq e^b} \\
        &= \stopset{t \geq 0}{\max _{r(t) \leq s \leq t} \prod_{k=1}^{K}   \left[ 1-\pi +\pi \, \exp(Z_{t}^{k}-Z_{s}^{k})  \right] \geq e^b},
\end{align*}
in other words, this choice of adaptive window preserves the original stopping time and its underlying statistic.

Unfortunately, for large values of $|\cP |$, in practice \eqref{eq:window:mei} offers little advantage as compared to the non-window-restricted counterparts, as the expected window size grows exponentially in $K$. An alternative approach would be to consider RLR strategy given by $\{ \sigma _n \}_{n \in \Naturals }$ defined in \eqref{eq:window:rlr}\ignore{ instead of the conservative \eqref{eq:window:mei}}. Note that the consecutive window sizes $W_n := \sigma _n - \sigma _{n - 1}$, $n \in \Naturals$ are iid under $\Pro _{\infty }$. In Fig.~\ref{fig:window_rlr:pmf} we present the pmf of $W_1$ when $\theta = 1.0$ and $K = 5$; in Fig.~\ref{fig:window_rlr:size_vs_k} we plot  $\Exp _\infty [W_1]$ as a function of $K$ when $\theta = 1.0$. Note that when $K = 5$, $\Exp _\infty [W_1] = \Exp _{\infty } [\sigma _1] \approx 5.3$.
Though this would remedy the computational complexity of the proposed rules, the question of whether the worst-case scenario for such modified procedures is when the change-point is at zero is still open.

\begin{figure}[!htb]
    \centering
    \begin{subfigure}[b]{.5\linewidth}
        \centering
        \includegraphics[width=1.0\textwidth]{rlr_5_sensors_w_1_pmf}
        \caption{Empirical p.m.f.\ for window size.}
        \label{fig:window_rlr:pmf}
    \end{subfigure}%
%\end{figure}
%
%\begin{figure}[!htb]
    \begin{subfigure}[b]{.5\linewidth}
        \centering
        %\includegraphics[width=1.0\textwidth]{rlr_K_10_w_max_with_theta_0_5}
        %\caption{Expected maximum window size, when stopping at time $t$.}
        \includegraphics[width=1.0\textwidth]{rlr_w_1_size_vs_k}
        \caption{Expected window size depending on the number of sensors.}
        \label{fig:window_rlr:size_vs_k}
    \end{subfigure}%
    \caption{Regenerative likelihood ratio (RLR) strategy, $\theta = 1.0$. Figures suggest that the expected window size, as a function of the number of sensors $K$, grows as $\log (K)$.}
    \nolabel{fig:window_rlr}
\end{figure}
%\FloatBarrier

}

\section{Conclusions} \label{sec:conclusions}
 We considered a generalized multisensor sequential change  detection problem, in which a number of possibly correlated sensors are monitored on line,  a global parameter vector determines their joint distribution and there is a change in an unknown subset of the components of this parameter vector. In this context, we established a strong asymptotic optimality property for various CUSUM-based detection rules.  In the special case that the sensors are independent and only a subset of them is affected by the change,  we proposed feasible versions of the above procedures.  We showed using simulation experiments that the proposed detection rules always outperform  scalable detection rules, such as the one proposed in \cite{mei_bio}, and we argued that this   phenomenon can be explained theoretically by the fact that the latter scheme enjoys a weaker form of asymptotic optimality.  Moreover, we proposed a modification of a detection rule in \cite{xie} that is  able to incorporate prior information.  Finally, we  considered the design of the multichart CUSUM in the special case that  the change is known to affect  exactly one sensor.

\section{Acknowledgments}
We would like to thank Drs. Alexander Tartakovsky, George Moustakides, Venu Veeravalli and Emmanuel Yashchin for  stimulating discussions and helpful suggestions, as well as the two referees whose comments helped us improve significantly an earlier version of this paper. The work of the first author was supported by the US National Science Foundation under Grant CCF 1514245, as well as by a collaboration grant  from  the   Simons Foundation.

\appendix
\section*{Appendix}\label{appen}

\begin{proof} [Proof of Theorem \ref{thm:second-order:glr_mix_cusum}]
    From the definition of the three detection rules it is clear that  $\tilde{S}_b \leq S_b$ and $\bar{S}_b \leq S_b$, therefore for any $b$ and $\cA$ we have:
    \[
        \cJ_\cA[\bar{S}_b] \leq \cJ_\cA[S_b] \quad \text{and} \quad   \Exp_\infty[S_b] \geq \Exp_\infty[\bar{S}_b]
    \]
    and
    \[
        \cJ_\cA[\tilde{S}_b] \leq \cJ_\cA[S_b] \quad \text{and} \quad   \Exp_\infty[S_b] \geq \Exp_\infty[\tilde{S}_b].
    \]
    Consequently,  in order to prove (i), it suffices to show that for any $b>0$
    \begin{align} \label{eq:lower_bounds}
       \Exp_{\infty}[\bar{S}_b]  \geq  e^{b} \quad \text{and}
        \quad \Exp_{\infty}[\tilde{S}_b]  \geq  e^{b} ,
    \end{align}
and in order to establish (ii), it suffices to show that as $b \rightarrow \infty$
    \begin{align} \label{eq:upper_bounds}
     \cJ_{\cA} [S_b] \leq \frac{b}{ I_{\cA}}  + \calo(1).
    \end{align}
%  where $\calo(1)$ is a bounded term as $b \rightarrow \infty$.

    (i) In order to prove the first inequality in \eqref{eq:lower_bounds}, we introduce the mixture probability measure
    \[
        \bar{\Pro} := \sum_{\cA \in \cP} p_{\cA} \Pro_{0}^{\cA},
    \]
    and denote by $\overline{\Exp}$ the corresponding expectation. The mixture-based CUSUM rule  $\bar{S}_b$ results from repeated application, in the spirit of Lorden's \cite{lorden1} construction,  of the one-sided sequential test:
    \begin{align*}
        \bar{T}_{b}  &:= \inf \left\{ t \geq 0 : \bar{\Lambda}_{t}  \geq e^{b} \right\},
    \end{align*}
    where $\bar{\Lambda}$ is the likelihood ratio process of $\bar{\Pro}$ versus $\Pro_\infty$, i.e.,
    \[
        \bar{\Lambda}_{t}:= \frac{d\bar{\Pro}}{d\Pro_\infty}(\cFt) = \sum_{\cA \in \cP} p_{\cA} \Lambda_{t}^{\cA}, \quad t \in \bN.
    \]
    Consequently,   from   \cite[Th.\ 2]{lorden1} it follows that for every $b>0$ we have
    \[
        \Exp_\infty[\bar{S}_b]\geq \frac{1}{\Pro_\infty( \bar{T}_b <\infty)},
    \]
    and from Wald's likelihood ratio identity we obtain
    \begin{align*}
        \Pro_\infty( \bar{T}_{b} <\infty)
        &=  \overline{\Exp} \left[ \bar{\Lambda}^{-1}_{T_b} \,  \One_{ \{ \bar{T}_{b} < \infty \} } \right] \leq e^{-b},
    \end{align*}
    which completes the proof of the first inequality in \eqref{eq:lower_bounds}. In order to establish the second inequality, we introduce for each subset $\cA \in \cP$ the corresponding Shiryaev-Roberts statistic
    \[
        R_{t}^{\cA}:=\sum_{s=0}^{t-1} \exp(Z_t^{\cA}-Z_s^{\cA}), \quad R_0^{\cA}:=0
    \]
    and observe that
    \[
        R_{t}^{\cA} \geq \max_{0 \leq s <  t}  \exp(Z_t^{\cA}-Z_s^{\cA}) = \exp( \tilde{Y}_{t}^{\cA}) .
    \]
We  also introduce the  mixture Shiryaev-Roberts statistic
   $
       R_{t} := \sum_{\cA \in \cP} p_{\cA} R_{t}^{\cA}
  $
    and the corresponding stopping time
    $
       \cR_b := \inf\left\{t \geq 0:  R_{t} \geq e^{b} \right\}.
    $
    Then,  for any $b>0$ we obtain
    \begin{align*}
        \tilde{S}_b &= \inf\left\{t \in \bN : \sum_{\cA \in \cP} p_{\cA} \exp( \tilde{Y}_{t}^{\cA}) \geq e^{b} \right\} \\
               &\geq \inf\left\{t \in \bN: \sum_{\cA \in \cP} p_{\cA} R_t^{\cA} \geq e^{b} \right\}                = \cR_b,
    \end{align*}
thus,  it suffices to show that $\Exp_{\infty}[\cR_b]  \geq  e^{b}$. In order to see this, note that  for any $b$ and $t$ we have   $0 \leq R_{t} < e^{b}$ on the event  $\{ \cR_b > t \}$, therefore
  \begin{align*}
      \int _{\{ \cR _b > t \}} | R_{t} - t| \, d \Pro _{\infty }
        &\leq \int _{\{ \cR _b > t \}}  R_{t}   \, d \Pro _{\infty }
        +   \int _{\{ \cR _b > t \}}  t   \, d \Pro _{\infty } \\
         &\leq   e^{b} \, \Pro_\infty( \cR _b > t  )    +   \int _{\{ \cR _b > t \}}  \cR_b  \, d \Pro _{\infty }.
   \end{align*}
For any given $b$, the upper bound goes to 0  as $t \rightarrow \infty$ as long as  $\Exp_{\infty}[\cR_b]  <\infty $. But this can be assumed  without any  loss of generality, since otherwise the desired   inequality holds  trivially. Therefore, we can apply the optional sampling theorem to the
$\Pro_\infty$-martingale $\{R_{t}-t\}_{t \in \bN}$  and the stopping time $\cR_b$ and obtain
   $
        \Exp_{\infty}[ \cR_b]= \Exp_\infty[ R_{\cR_b}] \geq e^{b},
   $
 where the  inequality holds from the definition of the stopping rule $\cR_b$. This completes the proof of the second inequality in \eqref{eq:lower_bounds}.

    (ii) For any $b>0$ and $\cA \in \cP$,  from the definition of $S_b$ in \eqref{eq:glr_cusum:cusum} it follows that  $S_b \leq S_{b-\log p_{\cA}}^{\cA}$, where $S_{b-\log p_{\cA}}^{\cA}$ is the CUSUM rule   defined in \eqref{eq:cusum} with threshold $b-\log p_\cA$. Consequently, as $b \rightarrow \infty$ we have
    \begin{align*}
    \cJ_{\cA} [S_b] &\leq
    %\cJ_{\cA} [S_{b-\log p_{\cA}}^{\cA} ] \\
              \frac{b- \log p_{\cA} }{ I_{\cA}}  + \calo(1)
           = \frac{b }{ I_{\cA}}  + \calo(1),
    \end{align*}
% where $\calo(1)$ is a bounded term as $b \rightarrow \infty$,
which completes the proof of     \eqref{eq:upper_bounds}.

\end{proof}

\begin{proof}[Proof of Proposition \ref{prop1}]
Fix $b>0$ and $\cA \in \cP$.   (i)  We will first prove the result for the GLR-CUSUM, $S_b$.  On the event $\{S_b>\nu\}$ we have
    \[
        S_b  =\inf \left\{t >\nu:    \max_{\cB \in \cP}  \left(Y_{t}^{\cB} -Y_{\nu}^{\cB}+
        Y_{\nu}^{\cB} +\log p_{\cB} \right) \geq b \right\} .
    \]
 For each $\cB \in \cP$,   $\{Y_{t}^{\cB} -Y_{\nu}^{\cB}\}_{t > \nu}$ depends only on the post-change observations. Therefore,  the detection statistic depends on the pre-change observations only through the non-negative random variables  $Y_{\nu}^{\cB}$, $\cB \in \cP$. Moreover, the  detection statistic is increasing in these quantities, which means that, for any possible change point $\nu$,  the worst-possible scenario for the history of observations up to $\nu$ is that $Y_{\nu}^{\cB}=0$ for every $\cB \in \cP$:
     \begin{align*}
        \esssup \Exp^{\cA}_{\nu}[ \left(S_b- \nu \right)^{+} | \cF_{\nu}]
            &= \Exp^{\cA}_{\nu}[ \left(S_b- \nu \right)^{+}  | \, Y^{\cB}_{\nu}=0, \,  \forall \, \cB \in \cP].
    \end{align*}
  However, when  $Y_{\nu}^{\cB}=0$ for every $\cB \in \cP$, each  sequence  $\{Y_t^{\cB}-Y_{\nu}^{\cB} \}_{t \geq \nu}$ under $\Pro_\nu^\cA$ has the same distribution as $\{Y^{\cB}_t\}_{t \geq 0}$ under $\Pro_0^\cA$.   Thus, for every change-point $\nu$ we have
    \begin{align*}
      \Exp^{\cA}_{\nu}[ \left(S_b- \nu \right)^{+}  | \, Y^{\cB}_{\nu}=0, \,  \forall \, \cB \in \cP]  &= \Exp_{0}^{\cA}[S_b],
    \end{align*}
which proves that     $\cJ_{\cA}[S_b] = \Exp_{0} ^{\cA } [S_b]$.
%   Taking the worst-case with respect to the change point $\nu$ we obtain $\cJ_{\cA}[S_b] \leq \Exp_{0} ^{\cA } [S_b]$, which completes the proof since by definition  $\cJ_{\cA}[S] \geq \Exp_{0} ^{\cA } [S_b]$.

(ii) In order to prove the     corresponding result for $\bar{S}_b$, we note that  on the event $\{ \bar{S}_b > \nu\}$ we have
    \begin{align*}
            \bar{S}_b &= \inf \left\{ t >\nu : \max _{0 \leq s \leq t} \sum _{\cB \in \cP}  p_{\cB}  \exp(Z_t^\cB - Z_s^\cB) \geq  e^{b} \right\} \\
            & \leq  \inf \left\{ t >\nu : \max _{\nu \leq  s \leq t} \sum _{\cB \in \cP}  p_{\cB}  \exp(Z_t^\cB - Z_s^\cB) \geq  e^{b} \right\} \\
            &:= \bar{S}_b^{(\nu)} .
    \end{align*}
For every  change point $\nu$,   $\bar{S}_b^{(\nu)}$ under $\Pro_\nu^\cA$  has the same distribution as $\bar{S}_b$ under $\Pro_0^\cA$ and is independent of $\cF_\nu$. Therefore,
    \begin{align*}
       \Exp_{\nu }^{\cA } \left[  ( \bar{S}_b - \nu )^{+} | \cF _{\nu} \right]
            &  \leq \Exp _{\nu} ^{\cA } [\bar{S}_b^{(\nu)}]
                 = \Exp_{0} ^{\cA } [\bar{S}_b],
    \end{align*}
   which proves that  $\cJ_{\cA}[\bar{S}_b] \leq \Exp_{0} ^{\cA } [\bar{S}_b]$, and consequently  $\cJ_{\cA}[\bar{S}_b] = \Exp_{0} ^{\cA } [\bar{S}_b]$.
   %, since the reverse inequality is obvious.

(iii) Finally, on the event $\{ \tilde{S}_b > \nu\}$ we have
     \begin{align*}
       \tilde{S}_b &=\inf \left\{t >\nu:    \sum_{\cB \in \cP}  p_{\cB} \, \exp\{ \tilde{Y}_{t}^{\cB}  \} \geq b \right\}  \\
        &=\inf \left\{t >\nu:    \sum_{\cB \in \cP}  p_{\cB} \, \exp\{ \tilde{Y}_{t}^{\cB} -\tilde{Y}_{\nu}^{\cB}+        \tilde{Y}_{\nu}^{\cB}  \} \geq b \right\} .
       \end{align*}
Fix $\cB \in \cP$.  When $\tilde{Y}_{\nu }^{\cB } < 0$,  then    from
 recursion \eqref{recursion2} it follows that   $\sset{\tilde{Y}_{t}^{\cB }}_{t > \nu }$  is independent of  $\cF_\nu$ and that its
distribution    under $\Pr _{\nu } ^{\cA }$
is the same as that  of $\sset{\tilde{Y}_t^{\cB }}_{t >0}$ under $\Pr _{0} ^{\cA }$.   When  $\tilde{Y}_{\nu }^{\cB } \geq 0$, a similar argument as in (i) shows that the worst case is $\tilde{Y}_{\nu }^{\cB } =0$. Thus, we conclude that
 $\cJ_{\cA}[\tilde{S}_b] = \EV _{0} ^{\cA } [\tilde{S}_b]$.

 \end{proof}

\begin{proof}[Proof of Proposition \ref{prop2}]
Recall the notation $C(\cP):=\sum_{\cB \in \cP} p^{|\cB|}$.

(i) We observe that  $\log  C(\cP_L) = L \log p +\log |\cP_L|$ and
  \begin{align*}
        \max_{\cA \in \cP_{L}}  \sum_{k \in \cA} \left(Z_{t}^{k}-Z_{s}^{k} +\log p \right) &=  \sum _{k = 1}^L  \left(Z_{s:t}^{(k)} + \log p \right),
    \end{align*}
which proves \eqref{eq:glr_cusum:impl_exact}. Moreover, we observe that
    $
         \log  C(\overline{\cP}_{L}) =
        \log \left( \sum_{k=1}^{L}  |\cP_k| \,  p^{k}  \right)
    $
    and that
    \[
        \max_{\cA \in \cP}  \sum_{k \in \cA} \left(Z_{t}^{k}-Z_{s}^{k} +\log p \right) = \sum _{k = 1}^L \left( Z_{s:t}^{(k)} +\log p \right)^{+}
    \]
    whenever there is  at least one  $k$ such that $Z_{t}^{k}-Z_s^k+\log p>0$.     On the other hand,  if $Z_{t}^{k}-Z_s^k+\log p \leq 0$ for every $1 \leq k \leq K$,  the  detection statistic cannot reach a  positive threshold, which
 completes the proof of \eqref{eq:glr_cusum:impl_at_most}.

(ii)  We observe that  for any $\pi \in (0,1)$ and $p=\pi /(1-\pi)$ we have
    \begin{align*}
&    \prod_{k=1}^{K} \left( 1-\pi +\pi \, \frac{\Lambda_t^k}{\Lambda_s^k} \right) \\
    &=  (1-\pi)^{K}+ \sum_{\cA \in \overline{\cP}_{K} } (1-\pi)^{K-|\cA|} \, \pi^{|\cA|} \prod_{k \in \cA}  \frac{\Lambda_t^k}{\Lambda_s^k}  \\
    &=  (1-\pi)^{K} \left[  1+ \sum_{\cA \in \overline{\cP}_{K} }  p^{|\cA|} \,  \frac{\Lambda_t^\cA}{\Lambda_s^\cA}  \right] ,
    \end{align*}
which proves  \eqref{eq:mixture_cusum:bayes}.

\end{proof}

\begin{proof}[Proof of Theorem  \ref{thm:second-order:bayes_sieg_cusum}]
    (i)   Fix $b$. We need to show that
    \begin{align} \label{eq:aux:bounds_on_arl}
        \Exp_\infty[ \hat{S}_{b}(\pi) ]  &\geq  e^{b} \quad \text{and} \quad
        \Exp_\infty[ \check{S}_{b}(\pi) ]  \geq  \frac{e^{b}}{|\overline{\cP}_{K}|},
    \end{align}
    where $\pi \in (0,1)$ in the first and $\pi \in (0,1]$ in the second inequality.
    In order to prove the first inequality, we recall that $\hat{S}_{b}(\pi)$ can be obtained with the repeated application of the one-sided sequential test $\hat{T}_{b}(\pi)$, defined in \eqref{eq:mixture_sprt:bayes}.
    %  i.e.,
    %  \begin{align*} \nolabel{eq:mixture_sprt:bayes:again}
    %      \hat{T}_{b}(\pi) &:= \inf \left\{ t \geq 0  :  \hat{Z}^{\pi}_{t}  \geq b \right\},
    %  \end{align*}
    %   where
    %  \begin{align*}
    %      \hat{Z}^{\pi}_{t}  &:= \log  \frac{d \hat{\Pro}^{\pi}}{d  \Pro_{\infty}} (\cFt)
    %        = \sum_{k=1}^{K}  \log  \left(1 - \pi + \pi \,  \Lambda_{t}^{k} \right).
    %  \end{align*}
    Therefore,  from \cite[Th.\ 2]{lorden1} it follows that
    \[
        \Exp_\infty[\hat{S}_{b}(\pi)] \geq \frac{1}{\Pro_{\infty}(    \hat{T}_{b}(\pi) <\infty)}
    \]
    and from  Wald's likelihood ratio identity we obtain
    \begin{align*} \nolabel{eq:wald}
        \Pro_{\infty}(    \hat{T}_{b}(\pi) <\infty) &= %\nonumber \\ & \quad
            \hat{\Exp} \left[ \exp\left\{ -\hat{Z}_{  \hat{T}_{b}(\pi) } \right\}
            \; \One _{ \{  \hat{T}_{b}(\pi) < \infty \} } \right]    \leq  e^{-b},
    \end{align*}
    where $\hat{Z}$ is the logarithm of the likelihood ratio  $\hat{\Lambda}^{\pi}$ defined in \eqref{eq:mixture_sprt:bayes}, i.e.,
    \begin{align*}
        \hat{Z}_{t}  &:= \log   \hat{\Lambda}_{t}
            = \sum_{k=1}^{K}  \log  \left(1 - \pi + \pi \,  \Lambda_{t}^{k} \right).
    \end{align*}
    This completes the proof of the first inequality in \eqref{eq:aux:bounds_on_arl}. In order to prove the  second inequality, it suffices to observe that for any $b>0$ and $\pi \in (0,1]$  we have
    \[
        \hat{S}_b(\pi) \geq \check{S}_{b}(\pi) \geq  \inf \Bigl\{ t \geq 0 : \max _{0 \leq s \leq t}
        \sum_{k=1}^{K}  (Z_{t}^{k}-Z_s^k)^{+}\geq b \Bigr\}
    \]
    and that the lower bound coincides (up to a constant) with the  unweighted GLR-CUSUM  in the case of complete uncertainty, that is the detection rule  obtained by setting  $L=K$ and $\pi=1$ in \eqref{eq:glr_cusum:impl_at_most}.

%GLR-CUSUM    in     \eqref{eq:mixture_cusum:bayes} for $L=K$ and $\pi=1$
  %which is the (unweighted) GLR-CU SUM in the case of complete uncertainty $(\cP=\overline{\cP}_{K}, p=1)$.

    (ii)  When $\pi=1$, $\check{S}_{b}(\pi)$ coincides with the GLR-CUSUM in the case of complete uncertainty, whose asymptotic optimality has already been established. Therefore, without loss of generality,  we assume that $\pi \in (0,1)$.  Due to the pathwise inequality
    $ \hat{S}_b(\pi) \geq \check{S}_{b}(\pi) $,
    it suffices to show that  for every subset $\cA$ we have
    \begin{align*} \nolabel{eq:mix-cusum-special:upper_bound}
        \cJ_{\cA}[\hat{S}_{b}(\pi)] &\leq  \frac{b}{I_{\cA}} + \calo(1),
    \end{align*}
    where $\calo(1)$ is a bounded term as $b \rightarrow \infty$.
    From \cite[Th.\ 2]{lorden1} it follows that
    $
        \cJ_{\cA}[   \hat{S}_{b}(\pi)]
            \leq  \Exp_{0}^{\cA}[   \hat{T}_{b}(\pi)]
    $
     for every $b>0$,      whereas  for every $t$ we observe that
    \begin{align*}
        \hat{Z}_{t} &= \sum_{k=1}^{K}  \log  \left(1 - \pi + \pi \,  \exp(Z_{t}^{k} ) \right) \\
            &\geq \sum_{k \in \cA}  (Z_{t}^{k} + \log \pi)+  \sum_{k \notin \cA}   \log (1-\pi) \\
            &=  Z_{t}^{\cA} + c_{\cA},
    \end{align*}
    where $c_{\cA} :=  |\cA|  \, \log \pi +  (K-|\cA|)  \, \log(1-\pi)$. Therefore,
   $
        \hat{T}_{b}(\pi)\leq  \inf\{t : Z_{t}^{\cA}  \geq  b+ c_{\cA} \}
   $
   for every $b>0$,    and consequently
    \begin{align*}
        \Exp_{0}^{\cA} [  \hat{T}_{b}(\pi) ] &\leq
          \frac{b+ c_{\cA}  }{I_{\cA}} +\calo(1),
    \end{align*}
    which completes the proof.
\end{proof}

\bibliographystyle{IEEEtran}
\bibliography{multisensor}

\end{document}